 \theoremstyle{plain}
 \theoremstyle{plain}
 \theoremstyle{plain}
 \newtheorem{thm}{Theorem}
 \theoremstyle{plain}
 \newtheorem{corr}{Corollary}
 \theoremstyle{plain}
 \theoremstyle{remark}
 \newtheorem*{rem*}{Remark}
 \theoremstyle{plain}
\theoremstyle{plain}
\newcommand{\ket}[1]{\vert{#1}\rangle} 
 \newcommand{\ketbra}[2]{\left\vert{#1}\right\rangle\!\left\langle{#2}\right\vert}
 \newcommand{\braket}[2]{\left\langle{#1}\middle\vert  {#2} \right\rangle}
\newcommand{\Tr}{\mathrm{Tr}}
\newcommand{\tr}{\mathrm{tr}}
\newcommand{\U}{\mathrm{U}} 
\newcommand{\eqdef}{\mathrel{:=}}
\theoremstyle{plain}
\theoremstyle{plain}
\theoremstyle{plain}
\newtheorem{lemma}{Lemma}
\theoremstyle{plain}
\theoremstyle{plain}
\newtheorem{remark}{Remark}
\theoremstyle{plain}
\begin{document}





\title{Measuring relational information between quantum states, and applications}

\author{Micha\l\ Oszmaniec}
\email{oszmaniec@cft.edu.pl}
\affiliation{ Center for Theoretical Physics, Polish Academy of Sciences, Al. Lotnik\'{o}w 32/46, 02-668 Warsaw, Poland}

\author{Daniel J.\ Brod}
\email{danieljostbrod@id.uff.br}
\affiliation{Instituto de F\'{\i}sica, Universidade Federal Fluminense, Niter\'oi, RJ, 24210-340, Brazil}

\author{Ernesto F. Galv\~ao}
\email{ernesto.galvao@inl.int}
\affiliation{International Iberian Nanotechnology Laboratory (INL), Av. Mestre Jose Veiga, 4715-330, Braga, Portugal}
\affiliation{Instituto de F\'{\i}sica, Universidade Federal Fluminense, Niter\'oi, RJ, 24210-340, Brazil}

\date{\today}

\begin{abstract}

The geometrical arrangement of a set of quantum states can be completely characterized using relational information only. This information is encoded in the pairwise state overlaps, as well as in Bargmann invariants of higher degree written as traces of products of density matrices. We describe how to measure Bargmann invariants using suitable generalizations of the SWAP test. This allows for a complete and robust characterization of the projective-unitary invariant properties of any set of pure or mixed states. As applications, we describe basis-independent tests for linear independence, coherence, and imaginarity. We also show that Bargmann invariants can be used to characterize multi-photon indistinguishability.
\end{abstract}

\maketitle

\textit{Introduction.} The relative geometrical arrangement of a set of quantum states determines properties such as the Hilbert space dimension they span  or whether they can be simultaneously diagonalized. These properties are physically meaningful, being independent of gauge choices for the global phase of each state's wave-function, and invariant under application of the same unitary transformation on all quantum states in the set.  More generally, we use the terms \emph{relational} or \emph{Projective-Unitary} (PU)  invariants to refer to properties of tuples of quantum states that are invariant under application of the same unitary transformation to each state in the tuple. 
The simplest example  of a relational invariant is the two-state overlap $\Delta_{12}=\tr(\rho_1\rho_2)$, which for pure states reduces to $\Delta_{12}=\left\lvert\left\langle \psi_1 | \psi_2 \right\rangle\right\lvert^2$, with a clear operational physical interpretation.

One way to fully characterize relational information in a tuple of (generally mixed) quantum states, $\mathcal{S}=(\rho_i)_{i=1}^N$, is to perform individual tomographic reconstruction of each state. This is both experimentally costly and unnecessary, providing much more information than strictly needed. A more economical approach is to directly measure a set of 
invariant properties of a tuple of quantum states. We call such a set \emph{complete} if it allows us to decide whether two tuples of quantum states $\mathcal{S}=(\rho_i)_{i=1}^N, \mathcal{S'}=(\sigma_i)_{i=1}^N$ are unitarily equivalent, in the sense that there exists a unitary operator $U$ taking states from $\mathcal{S}$ onto their counterparts in $\mathcal{S}'$: $\sigma_i=U\rho_i U^\dagger $. In the above, we identified the relational properties of quantum states with equivalence classes (orbits) of the unitary group acting on tuples of quantum states. This approach follows the spirit of the Erlangen program in Geometry \cite{Erlangen1900} (i.e. studying geometry via the lens of group actions and their invariants) and is justified by the fact that unitary channels constitute the most general inevitable operations allowed in quantum theory. Note that the same paradigm motivated the usage of invariant polynomials in the context of classification of entanglement classes subject to local unitary transformations \cite{LeiferLW04}.


In this Letter we introduce quantum circuits we call \textit{cycle tests}, which enable the direct measurement of complete sets of Bargmann invariants for both mixed and pure quantum states. We review previous mathematical literature identifying complete sets of invariants for different scenarios. For the case of pure states, we show all necessary invariants can be incorporated in a single Gram matrix $G_{ij}=\left\langle \psi_i |\psi_j \right\rangle$, with suitable gauge choices so that all parameters in $G$ are PU-invariants measurable by our proposed circuits. This operational approach to  measuring and using  Bargmann invariants results in applications such as multi-photon indistinguishability certification, and basis-independent tests of linear independence, imaginarity, and coherence.

We start by reviewing the definition of Bargmann invariants. Consider a tuple of $m$ pure quantum states (understood as rank one normalized projectors) $\Psi = (\psi_1, \psi_2, \ldots \psi_m)$, where we used the shorthand $\psi_j=\ketbra{\psi_j}{\psi_j}$. We later consider the generalization to mixed states. The Bargmann invariant of this tuple of states is defined as \cite{Bargmann64, ChienW16}:
\begin{equation}
\Delta_{12\cdots m}=\left\langle\psi_1|\psi_2 \right\rangle \left\langle\psi_2|\psi_3\right\rangle \cdots \left\langle\psi_{m}|\psi_1\right\rangle.
\end{equation}
We say that this invariant has \emph{degree} $m$.
Due to the suitable arrangement of bras and kets, the above expression is a well-defined PU-invariant function of states in $\Psi$. Note that, for $m=2$, this invariant reduces to the overlap $\Delta_{12}=\left|\left\langle\psi_1|{\psi_2}\right\rangle\right|^2$. Bargmann invariants have been studied in the context of characterization of photonic indistinguishability \cite{Menssen2017,Jones2020,Minke21} and geometric phases \cite{SimonM93, Chruscinski04}.
The phase of $\Delta_{12\cdots m}$ is precisely the Pancharatnam geometric phase \cite{Pancharatnam56} acquired by a wavefunction $\ket{\psi_1}$ subjected to a sequence of projective measurements onto states $\psi_m,..., \psi_2,\psi_1$. Equivalently, it is the Berry phase acquired by a cyclic trajectory along the shortest (geodesic) cyclic path visiting those states \cite{Samuel88} -- for a review of these ideas, see \cite{Chruscinski04}.



\textit{Measuring Bargmann invariants.} Let us now describe a quantum circuit family that measures any Bargmann invariant $\Delta_{12\cdots m}$. We call these circuits \textit{cycle tests}, and they are  generalizations of the well-known SWAP test used to measure the two-state overlap $\Delta_{12}$ (see Fig. \ref{fig:swap}-a).  The key gate in the cycle test circuit of Fig. \ref{fig:swap}-b is a controlled cycle permutation, which implements the cyclic permutation $(12\cdots m)$ of the input states when the control qubit is in state $\ket{1}$, and identity otherwise.  It is easy to check that the final measurement of the auxiliary qubit gives an estimate of  $\Re(\Delta_{12\cdots m})$. The circuit of Fig. \ref{fig:swap}-a was proposed in \cite{Ekert02} as a way to measure non-linear functionals of a single density matrix. Here, we use it with different inputs to estimate $\Re(\Delta_{12\cdots m})$, and propose a modification with an added phase gate $P=diag(1,i)$, to enable the  measurement of $\Im(\Delta_{12\cdots m})$ (see Fig. \ref{fig:swap}-b). Note that the cycle test reduces to the SWAP test for $m=2$ states. Overlaps can also be evaluated as a particular case of degree-3 invariants, as  $\Delta_{iij}=\Delta_{ij}=\left|\left\langle \psi_i |\psi_j \right \rangle \right|^2$ --- lower-degree invariants can always be recovered from higher-degree invariants with repeated indices.
The cycle test circuit can be implemented in linear depth using nearest-neighbor gates only, or in a depth that scales like $O(\log m)$ if we assume all-to-all qubit connectivity, as we describe in Appendix \ref{app:cycle}.



\begin{figure}[t]
    \centering
    \includegraphics[width=0.3\textwidth]{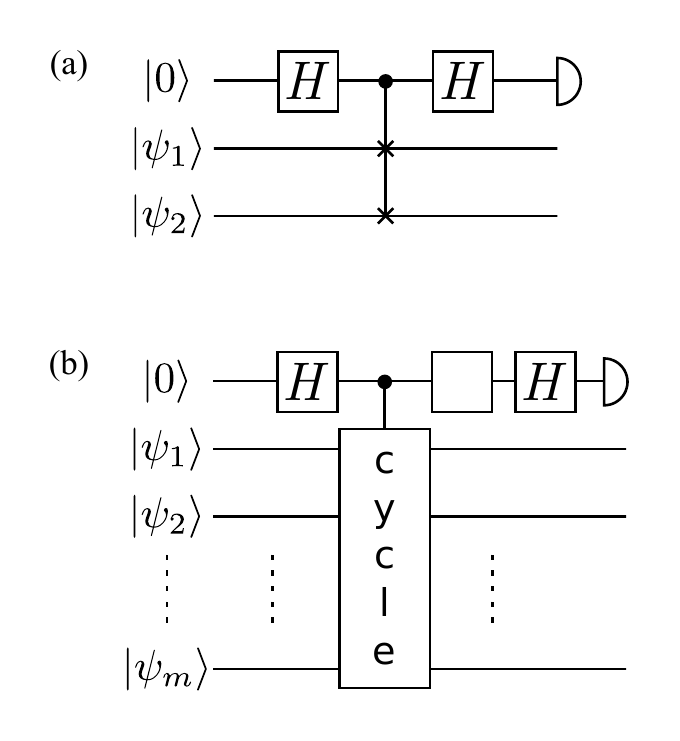}
    \caption{Circuits for measuring projective-unitary invariants of a tuple of states. (a) SWAP test to measure the two-state overlap $\Delta_{12}=\left|\left\langle\psi_1 |{\psi_2}\right\rangle\right|^2$. The probability of outcome 0 is $p(0)=(1+\Delta_{12})/2$. (b) Cycle test to measure the $m$-state invariant $\Delta_{12\cdots m}$. If the white square is the identity, we have $p(0)=(1+\Re(\Delta_{12\cdots m}))/2$. If the white square is the $P = \textrm{diag}(1,i)$ gate, we have $p(0)=(1+\Im(\Delta_{12\cdots m}))/2$. The controlled gate implements the cyclic permutation $(123\cdots m)$ of the inputs if the control is in state $\ket{1}$, and does nothing otherwise.}
    \label{fig:swap}
\end{figure}


\textit{Complete characterization of relational information between states.} The following Theorem gives a complete characterization of PU-invariant properties of a collection of $N$ pure sates in terms of Bargmann invariants.

\begin{thm}[Characterization of relational properties of pure states]\label{thm:PureSates}
Let $\Psi= (\psi_i)_{i=1}^N$ be a tuple of pure quantum states. Then, the PU-equivalence class of $\Psi$ is uniquely specified by values of at most $(N-1)^2$ Bargmann invariants. The invariants are of degree $m\leq N$ and their choice depends on $\Psi$.
\end{thm}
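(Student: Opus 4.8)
The plan is to show that the Gram matrix $G_{ij}=\langle\psi_i|\psi_j\rangle$ determines the PU-equivalence class, and then to argue that, after fixing the phase gauge, all of its independent entries can be read off from at most $(N-1)^2$ Bargmann invariants. For the first part, recall the standard fact that two tuples of vectors in a Hilbert space are related by a unitary (or isometry) if and only if they have the same Gram matrix; applying this to the normalized vectors $\ket{\psi_i}$ (each fixed up to a phase) shows that the PU-equivalence class of $\Psi$ is exactly the data of $G$ modulo the gauge transformation $G_{ij}\mapsto e^{-\ii\theta_i}G_{ij}e^{\ii\theta_j}$ coming from rephasing each $\ket{\psi_i}$. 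So the task reduces to: (i) exhibit a canonical choice of phases, and (ii) count how many Bargmann invariants are needed to recover the gauge-fixed $G$, and verify those invariants have degree $\le N$.

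For step (i), I would proceed greedily along a spanning structure of the ``overlap graph'' whose vertices are $1,\dots,N$ and whose edges join $i,j$ with $\langle\psi_i|\psi_j\rangle\neq 0$. First decompose into connected components; states in different components are mutually orthogonal and contribute nothing to the count, so it suffices to treat one component with, say, $k$ states. Pick a spanning tree of that component: along each of its $k-1$ edges we may use the $k$ phase freedoms (one of which is an irrelevant overall phase of the component) to make the corresponding overlap $\langle\psi_i|\psi_j\rangle$ real and positive. Having fixed the gauge this way, every remaining overlap $\langle\psi_i|\psi_j\rangle$ is now a genuine PU-invariant. The key point is that each such remaining overlap can be expressed through a Bargmann invariant: if $i$ and $j$ are joined in the tree by a path $i=i_0,i_1,\dots,i_\ell=j$, then the cycle invariant $\Delta_{i_0 i_1\cdots i_\ell j}=\langle\psi_{i_0}|\psi_{i_1}\rangle\cdots\langle\psi_{i_{\ell-1}}|\psi_{i_\ell}\rangle\langle\psi_j|\psi_i\rangle$ has all but its last factor real positive and known, so it determines $\langle\psi_j|\psi_i\rangle$ (hence $\langle\psi_i|\psi_j\rangle$) up to the already-measured modulus $|\langle\psi_i|\psi_j\rangle|^2=\Delta_{ij}$, which is itself a degree-$2$ Bargmann invariant. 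The degree of this cycle is $\ell+1\le k\le N$, as required.

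For step (ii), the counting: within a component of size $k$ we have fixed $k-1$ overlaps by gauge, and we must determine the $\binom{k}{2}-(k-1)$ remaining off-diagonal overlaps; each such complex number costs one modulus invariant ($\Delta_{ij}$) plus one phase invariant (the cycle above), i.e. two real Bargmann-invariant values, while the $k-1$ tree edges each cost only one (their modulus, to confirm they are the positive square root). Summing $2\big(\binom{k}{2}-(k-1)\big)+(k-1)=k^2-2k+1=(k-1)^2$ per component, and using $\sum_t(k_t-1)^2\le(\sum_t(k_t-1))^2\le(N-1)^2$ over components of sizes $k_t$ with $\sum_t k_t=N$, gives the stated bound $(N-1)^2$. (If one prefers to count complex invariants as single objects, the same bookkeeping gives $(N-1)^2$ directly.)

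The main obstacle is the gauge-fixing/bookkeeping in step (i)--(ii): one must be careful that the greedily chosen phases are globally consistent (a spanning tree, rather than an arbitrary set of edges, is exactly what guarantees no circular constraints), that the ``leftover'' overlaps really are recovered — modulus from a degree-$2$ invariant and phase from a single higher-degree cycle anchored on tree edges whose phases are already pinned down — and that the per-component counts aggregate to $(N-1)^2$ rather than something larger. The underlying linear-algebra fact (equality of Gram matrices $\Leftrightarrow$ existence of a unitary) is routine and can be cited; everything nontrivial is in organizing the invariants so that their number meets the bound and their degrees stay $\le N$.
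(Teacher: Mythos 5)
Your construction is essentially the paper's own proof: reduce PU equivalence to equality of Gram matrices modulo rephasing, fix the gauge by making the inner products along a spanning tree of the frame (overlap) graph real and positive, and recover each remaining nonzero inner product from its degree-2 invariant together with the Bargmann invariant of the cycle obtained by closing the unique tree path, whose length bounds the degree by $N$. All of that is sound and matches the paper step for step.

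The one genuine gap is in the bookkeeping of which invariant \emph{values} are taken to specify the class. You declare that states in different connected components ``contribute nothing to the count'' and then count only within-component invariants, arriving at $\sum_t (k_t-1)^2\le (N-1)^2$. But the vanishing of the cross-component overlaps is itself part of the relational data and must be certified by invariants in your chosen set: for $N=2$ orthogonal states your set is empty, and an empty (or purely within-component) list of values is equally consistent with tuples that are not PU-equivalent to $\Psi$, e.g.\ $(\ket{0},\ket{1})$ versus $(\ket{0},\ket{+})$. So, as written, the exhibited invariants do not \emph{uniquely} specify the class whenever the frame graph is disconnected (and, more generally, the zeros of the Gram matrix must be among the specified values). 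The fix is cheap and is exactly what the paper does: include all $\binom{N}{2}$ degree-2 invariants (these determine the frame graph, zeros included), plus one phase-fixing cycle invariant per nonzero edge outside the spanning forest; since a graph on $N$ vertices has at most $\binom{N-1}{2}$ independent cycles, the total is at most $\binom{N}{2}+\binom{N-1}{2}=(N-1)^2$, so the stated bound survives. With that correction your argument coincides with the paper's; the remaining differences (e.g.\ writing the phase-fixing cycle as $\Delta_{i_0 i_1\cdots i_\ell j}$ with the endpoint repeated rather than $\Delta_{i_0 i_1\cdots i_\ell}$) are cosmetic.
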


The above Theorem follows from results obtained by Chien and Waldron in \cite{ChienW16}. In what follows we present a simplified and shortened proof. Our strategy is based on encoding complete PU-invariants in a single Gram matrix in a way that depends on orthogonality relations of states in $\Psi$. This allows us to upper bound the number of necessary invariants by $(N-1)^2$.  and employ the results to the problem of approximate PU equivalence of quantum states. 
\begin{proof}
We start with the connection between PU equivalence of two tuples of pure states  $\Psi= ( \psi_i)_{i=1}^N$, $\Phi= ( \phi_i)_{i=1}^N$ and unitary equivalence between the associated tuples of wave-functions. Namely, $\Psi$ is PU equivalent to $\Phi$ if and only if it is possible to find representing wave functions $\mathcal{A}=\left(\ket{\psi_i} \right)_{i=1}^N $,
$\mathcal{B}=\left(\ket{\phi_i} \right)_{i=1}^N$ that are unitarily equivalent. That is, there exists an unitary operator $U$ such that $\ket{\phi_i}=U\ket{\psi_i}$, for $i=1,\ldots,N$. The problem of unitary equivalence of tuples of vectors is equivalent to equality of the corresponding Gram matrices i.e.,
$G^{\mathcal A}_{ij}=\braket{\psi_i}{\psi_j}=\braket{\phi_i}{\phi_j}=G^{\mathcal B}_{ij}$ 
(this follows e.g. from the uniqueness the Cholesky decomposition of $G$ \cite{johnston2021advanced}). The phase of individual wave functions 
$\ket{\psi_i}$ is an unphysical, gauge degree of freedom, and therefore the Gram matrix of a collection of pure states $\Psi $ is uniquely defined only up to conjugation via a diagonal unitary matrix. 
Assume now that for every tuple of quantum states $\Psi= ( \psi_i)_{i=1}^N$ we have a construction (presented in the next paragraph) of a valid  Gram matrix $\tilde{G}^\Psi$  whose entries can be expressed solely in terms of PU-invariants of states from $\Psi$. It then follows from the above considerations that $\Psi$ is PU equivalent to $\Phi$ if and only if $\tilde{G}^\Psi=\tilde{G}^\Phi$.  

The construction of $\tilde{G}^\Psi$ proceeds as follows. We first introduce the frame graph $\Gamma^\Psi$ as an (undirected) graph whose vertices are labeled by $i=1,\ldots,N$ with edges connecting $i$ and $j$ if and only if $\tr(\psi_i\psi_j)\neq 0$. Without loss of generality we assume that $\Gamma^\Psi$ is connected \footnote{If $G^\Psi$ is not connected then the problem of PU equivalence of states from $\Psi$ reduces to independent problems concerning states associated with connected components in $\Gamma^\Psi$.} i.e., that every pair of vertices in $\Gamma^\Psi$ can be connected via a path in $\Gamma^\Psi$. We can now choose a subgraph of $\Gamma^\Psi$, denoted by $\mathcal{T}^\Psi$, which is connected and contains the same vertices as $\Gamma^\Psi$, but no cycles. This is known as a \emph{spanning tree} of $\Gamma^\Psi$, and there exists at least one spanning tree for any connected graph \cite{gross1998graph} -- see Fig.\ \ref{fig:concept} for an illustration. We now choose vector representatives $\ket{\psi_i}$ of states in $\Psi$ in such a way that, for $\{i,j\}$ an edge in $\mathcal{T}^\Psi$, $\braket{\psi_i}{\psi_j}=\left|\braket{\psi_i}{\psi_j}\right|$. Every other inner product $\braket{\psi_i}{\psi_j}$ will be either 0, or its phase will be fixed as follows. We first find a path from $j$ to $i$ within $\mathcal{T}^\Psi$. Since $\mathcal{T}^\Psi$ is a spanning tree, such a path is guaranteed to exist and to be unique. Suppose this path has $k$ vertices $(\alpha_1=j, \alpha_2, \dots , \alpha_{k-1}, \alpha_k=i)$. Consider now the $k$-cycle that would be formed by adding the vertex $j$ at the end of this path (note that vertices $i$ and $j$ are connected by an edge in $\Gamma^\Psi$) 
and denote it by $C_{ij}$. 
By construction, every edge in $C_{ij}$ except for $\{i,j\}$ is in $\mathcal{T}^\Psi$, and therefore all the inner products
$\braket{\psi_{\alpha_j}}{\psi_{\alpha_{j+1}}}$ for $j \in \{1,2,\dots k \}$ have been chosen as positive. Hence, if we denote the degree-$k$ Bargmann invariant associated to $C_{ij}$ as $\Delta(C_{ij}):=
 \Delta_{\alpha_1 \alpha_2 \ldots \alpha_{k-1} \alpha_k}$ we can write
\begin{align}
 \Delta(C_{ij})=\braket{\psi_i}{\psi_j}  \prod_{l=2}^{k-2} \braket{\psi_{\alpha_l}}{\psi_{\alpha_{l+1}}}.
\end{align}
Therefore, we can fix the phase of every nonzero inner product that is not in $\mathcal{T}^\Psi$ as
\begin{equation}
    \braket{\psi_i}{\psi_j}=\frac{\Delta(C_{ij})}{  \prod_{l=2}^{k-2} \braket{\psi_{\alpha_l}}{\psi_{\alpha_{l+1}}}}.
\end{equation}

Thus, all matrix elements of the so-constructed Gram matrix $\tilde{G}^{\Psi}_{ij}=\braket{\psi_i}{\psi_j}$ are expressed via Bargmann invariants of degree at most $N$. The total number of invariants required to construct $\tilde{G}^{\Psi}$ in this way is upper-bounded by $(N-1)^2$, as we first measure $N(N-1)/2$ degree-2 invariants to obtain the frame graph, and then at most $(N-1)(N/2-1)$ additional invariants to fix the phases of the non-null elements of $\tilde{G}^{\Psi}$ corresponding to edges $\{i,j \} \notin \Gamma^\Psi$.
\end{proof}

\begin{figure}[t]
    \centering
    \includegraphics[width=0.3\textwidth]{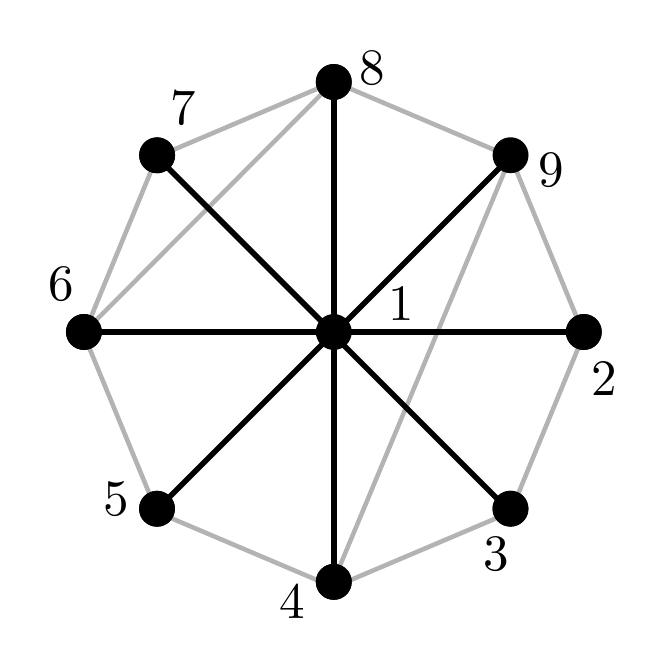}
    \caption{Frame graph for the example described in the text. Edges in black correspond to the choice of spanning tree, in this case a star graph.}
    \label{fig:concept}
\end{figure}


We remark that in general we cannot characterize the relational properties of even single-qubit states without using Bargmann invariants of degree $m>2$. As an example, take two triples of single-qubit states $\Psi=(\ketbra{X_+}{X_+},\ketbra{Y_+}{Y_+}, \ketbra{Z_+}{Z_+}) $, $\Phi=(\ketbra{X_+}{X_+},\ketbra{Y_-}{Y_-}, \ketbra{Z_+}{Z_+}) $, where $\ketbra{P_\pm}{P_{\pm}}$ are projectors on Pauli eigenvectors with $\pm 1$ eigenvalues. All overlaps within each tuple have the same value $\Delta_{ij}= 1/2$. Therefore, the tuples only differ in that $\Delta_{123}^\Psi = \frac{1+i}{4} \neq   \frac{1-i}{4} =\Delta_{123}^\Phi$.

Let us now consider an application of  \cref{thm:PureSates} to a $N=9$ state tuple $(\psi_i)_{i=1\ldots 9}$. The first step is to construct the frame graph. This is done by measuring all $N(N-1)/2 = 36$ overlaps $\left|\left\langle \psi_i |\psi_j \right \rangle \right|$. Suppose the frame graph is the one represented in \cref{fig:concept} (recall that missing edges correspond to null overlaps). One choice of spanning tree is a star graph, where vertex 1 is connected to every other. We can fix the phases so that every degree-2 invariant $\left\langle \psi_1 |\psi_i \right \rangle > 0$. Finally, we consider all cycles that can be built by adding edges from the original frame graph to the spanning tree (in \cref{fig:concept}, these correspond to gray edges). In each such cycle, there is only one new edge whose phase has yet to be fixed; in this example only 3-cycles appear. We then measure the degree-3 Bargmann invariants $\Delta_{1 i j}$ for all distinct pairs $\{i,j\}$, and attribute its phase to edge $\{i,j\}$. There are 28 of these for this example size, though in  \cref{fig:concept} only 10 of them are nonzero. This procedure leads to the Gram matrix:
\begin{equation}
    \tilde{G}^\Psi_{ij}=
    \begin{cases}
       \sqrt{\Delta_{ii}}= \left|\left\langle \psi_i |\psi_i \right \rangle \right|=1 & \text{for } i=j \\
      \sqrt{\Delta_{1j}} =\left|\left\langle \psi_1 |\psi_j \right \rangle \right|, &  \text{for } i=1, \\ 
     \frac{\Delta_{ij1}}{\sqrt{|\Delta_{1j}| |\Delta_{1i}|}} & \text{for }j > i. \label{eq:gram}
    \end{cases}
  \end{equation}
The gauge choices involved in defining the Gram matrix given in Eq.\ (\ref{eq:gram}) mean that all parameters in $\tilde{G}^\Psi$ are physically meaningful quantities, measurable using the cycle test circuits and fully characterizing relative information among states in $\Psi$. 

The choice of spanning tree is not unique. In the above example, we could have chosen instead e.g. a simple path starting in vertex 1 and visiting all vertices sequentially up to vertex 9. Building the Gram matrix $\tilde{G}^\Psi$ using this path requires the same number of invariants, but they are more experimentally involved to measure, as some will have degree $> 3$. An interesting open question is to find alternative characterization procedures for general scenarios that minimize the experimental effort required.



\textit{Robustness, and mixed states.}  
Any implementation of cycle tests will have to deal with experimental imperfections, and the fact that the prepared states will, at best, be mixed-state approximations of the ideal pure states. In Appendix \ref{app:stability} we give an argument that, as long as the prepared density matrices have high purity, then the Bargmann invariants measured will be close to the Bargmann invariants of the target pure states (understood to be the projector onto the eigenvector corresponding to the largest eigenvalue of a  density matrix). Furthermore, in Appendix \ref{app:pustab} we show that if Bargmann invariants evaluated for two tuples of pure states $\Psi= (\psi_i)_{i=1}^N$, $\Phi= ( \phi_i)_{i=1}^N$ differ from each other only slightly, then states comprising them can be \emph{approximately transformed onto each other}. Specifically, using the encoding of Bargmann invariants into Gram matrices we show that there exists a unitary transformation $U$ that \emph{approximately}  transforms $\Psi$ onto $\Phi$ in the sense that the average infidelity is upper bounded by
\begin{equation}
   \frac{1}{N} \sum_{i=1}^N \left(1-\Tr( U \psi_i U^\dagger \phi_i  )\right) \leq \frac{C_\Psi}{N}  \|\tilde{G}^\Psi-\tilde{G}^\Phi\|^2_2 \ ,
\end{equation}
where $C_\Psi$ is a constant (depending on $\Psi$), $\tilde{G}^\Psi,\tilde{G}^\Phi$ are Gram matrices from the proof of Theorem \ref{thm:PureSates}, and $\|X\|_2=\sqrt{\Tr(X^2)}$ is the Hilbert-Schmidt norm. This shows that Bargmann invariants offer a robust characterization of PU equivalence of tuples of pure states.

Identifying complete sets of PU-invariants for tuples of mixed states is a much more challenging problem.  If we use mixed states $\rho_i\ (i=1,2,\dots ,m)$ as inputs to the cycle test circuits of Fig. \ref{fig:swap},  the output gives an operational meaning to mixed-state Bargmann invariants: $\Delta_{12...m}=\Tr\left(\rho_1 \rho_2 \ldots \rho_m \right)$. In Appendix \ref{app:mixedInvariants} we show that for a tuple of $N$  mixed states $\mathcal{S}=(\rho_i)_{i=1}^N$  in dimension $d$  Bargmann invariants of degree $m\leq d^2$ form a complete set of invariants characterizing PU invariant properties of $\mathcal{S}$. Moreover, the number of independent invariants can be chosen to be $Nd^2+1$. These results follow from known results from invariant theory \cite{PROCESI1976306,formanekpolynomial,wigderson2019mathematics}, which we (partially) reprove with the help of Schur-Weyl duality.

\textit{Application: Testing for linear independence.}  Whether a set of states is linearly independent (LI) is clearly a unitary-invariant property. LI states can be probabilistically cloned, and can be unambiguously discriminated. Recognizing linear dependence of a set of states enables us to find a more compact description of the set; in machine learning, this corresponds to a procedure for dimensionality reduction. Using a complete set of Bargmann invariants, we can compute the hypervolume generated by the set of states to determine whether they are linearly independent. For the case of $m$ non-orthogonal vectors, we can use the Gram matrix encoding of the complete set of invariants we have described to write the linear independence condition simply as: $\det(G)>0$. In Appendix \ref{app:li} we calculate $\det(G)$ explicitly as a function of the invariants for the cases of 3 and 4 non-orthogonal states. 

\textit{Application: Basis-independent imaginarity witnesses.} Measuring Bargmann invariants of degree $3$ and above can serve as a basis-independent witness that the input quantum states have complex-valued amplitudes. The simple reason is that $\Delta_{12\dots m}$ is a gauge-invariant quantity, so if we measure $\Im(\Delta_{12\dots m}) \neq 0$, some of the density matrices at the input must necessarily have complex-valued elements, independently of the basis chosen. As an example, it is easy to check that the 3 Pauli operator eigenstates $\ket{A}=\ket{0}, \ket{B}=\frac{1}{\sqrt{2}}(\ket{0}+\ket{1}), \ket{C}=\frac{1}{\sqrt{2}}(\ket{0}+i\ket{1})$ achieves the maximum value of $\Im(\Delta_{ABC})=1/4$ for this 3-state scenario. In Appendix \ref{app:imaginarity} we describe a tuple of $n$ single-qubit states that are vertices of a regular spherical $n$-gon on the Bloch sphere, for which $\Im(\Delta)\to 1$ as $n \to \infty$. 

Experimental tests have been recently proposed to discriminate whether complex  amplitudes (or ``imaginarity'') are an essential feature of quantum theory \cite{Renou21, Wu21}. The application of our proposal as an imaginarity witness differs fundamentally from the proposals of \cite{Renou21,Wu21}. In particular, though basis-independent, our approach is not device-independent. Concretely, we can only interpret the outcome of the circuit of \cref{fig:swap}(b) as measuring a set of Bargmann invariants if the input is promised to be a product state. Therefore, we cannot rule out an adversarial preparation of the states that might fool our imaginarity witness by using an entangled real-amplitude state, similar to techniques of \cite{McKague2009}.

\textit{Application: Basis-independent coherence witnesses.} If the $m$ states $\rho_i$ are simultaneously diagonalizable in a single reference basis, the degree-$m$ Bargmann invariant $\Delta_{12\dots m}=\Tr(\prod_i \rho_i)$ is simply the probability that independent measurements of the reference basis on all states give the same outcome \cite{GalvaoB20}. Hence, for diagonal states these invariants must be real and in the range $\Delta_{12\cdots m} \in [0,1]$. Measuring $\Im(\Delta_{12\dots m})\neq 0$, or real values outside of the allowed range for diagonal states, serves as a basis-independent witness of coherence, generalizing the overlap-based witnesses of \cite{GalvaoB20}. Note that witnesses of imaginarity, as discussed in the previous section, are also witnesses of coherence, as complex-valued entries of a density matrix must be off-diagonal. In Appendix \ref{app:cohe} we look at the simplest scenarios where coherence can be witnessed by Bargmann invariant measurements when we have sets with either 2 or 3 states.

\textit{Application: Characterization of photonic indistinguishability.} The formalism of Bargmann invariants and projective unitary invariance is particularly well-tailored to experiments with single photons in linear optics \cite{SimonM93}. Each photon in an $m$-photon experiment is described by a state over its internal degrees of freedom - such as time-of-arrival, polarization, and orbital angular momentum \cite{Tichy2015}. Suppose we apply some linear-optical transformation to these photons that is insensitive to their internal degrees of freedom, and measure the occupation number at the output of the device. The outcome statistics will not depend on exactly which internal state a particular photons is in, as the apparatus is blind to this information, but it will depend on the relational information, such as imperfect overlaps, which can cause the photons to be effectively partially distinguishable. In other words, the output probabilities of such an experiment can only depend on the Bargmann invariants of the set of states (as well as the linear-optical transformation). This can be seen in several recent experimental and theoretical works concerning genuine 3- and 4-photon indistinguishability \cite{Giordani21,GalvaoB20,Menssen2017,Jones2020,Minke21}.  

In \cite{GalvaoB20}, it was shown how measurements of some overlaps among a set of $m$ states can result in nontrivial lower bounds for all overlaps. Bargmann invariants of higher degree can be useful in the same way,   as by definition $|\Delta_{12\dots m}|^2$ is a lower bound to overlaps of all neighboring states in the associated cycle. This suggests improved designs for photonic measurements of Bargmann invariants may help in the characterization of multi-photon sources.


\textit{Conclusion.}
We reviewed how Bargmann invariants encode the projective-unitary invariant properties of a set of states. For the case of pure states, we have shown how to represent the complete information in a Gram matrix of inner products, written in terms of invariant quantities only, representing relational information about the states in the set. We have proposed a way of  measuring real and imaginary parts of arbitrary Bargmann invariants using cycle tests, a natural generalization of the SWAP test. We discussed several applications: coherence and imaginarity witnesses, testing for linear independence, and characterization of multi-photon indistinguishability. Open problems include finding efficient NISQ methods for measuring and using Bargmann invariants in algorithmic applications, studying possible applications to self-testing in a semi-device independent  context \cite{selftesting2018,Miklin2021},  and developing further the resource theories of imaginarity and coherence based on Bargmann invariants, whose first steps we have described.

\begin{acknowledgments}
 We thank Adam Sawicki for his comments regarding the role of Schur-Weyl duality for the problem of PU equivalence of mixed states. MO  acknowledges support by the Foundation for Polish Science through the TEAM-NET project (contract no. POIR.04.04.00-00-17C1/18-00). 
DJB acknowledges support from Instituto Nacional de Ci\^{e}ncia e Tecnologia de Informação Quântica (INCT-IQ, CNPq) and FAPERJ. EFG acknowledges funding of the Portuguese institution FCT – Fundação para Ciência e Tecnologia via project CEECINST/00062/2018. This work was supported by the H2020-FETOPEN Grant PHOQUSING (GA no.: 899544).
\end{acknowledgments}

\appendix

\section{Efficient constructions for cycle test circuits} \label{app:cycle}

In the main text we described how to use an $m$-state cycle test (cf.\ Fig.\ 1 of main text) to measure the $m$th order Bargmann invariant of the corresponding set of states. The main ingredient of the cycle test is a controlled permutation gate which applies a cyclic permutation when the control qubit is in the $\ket{1}$ state (and does nothing otherwise). We now describe two efficient ways of implementing this controlled cycle permutation. For simplicity, we consider first how to decompose a cyclic permutation in terms of pairwise permutations (i.e.\ transpositions), without accounting for the control qubit, to which we return at the end of this Section.

\begin{figure}[t]
    \centering
    \includegraphics[width=0.4\textwidth]{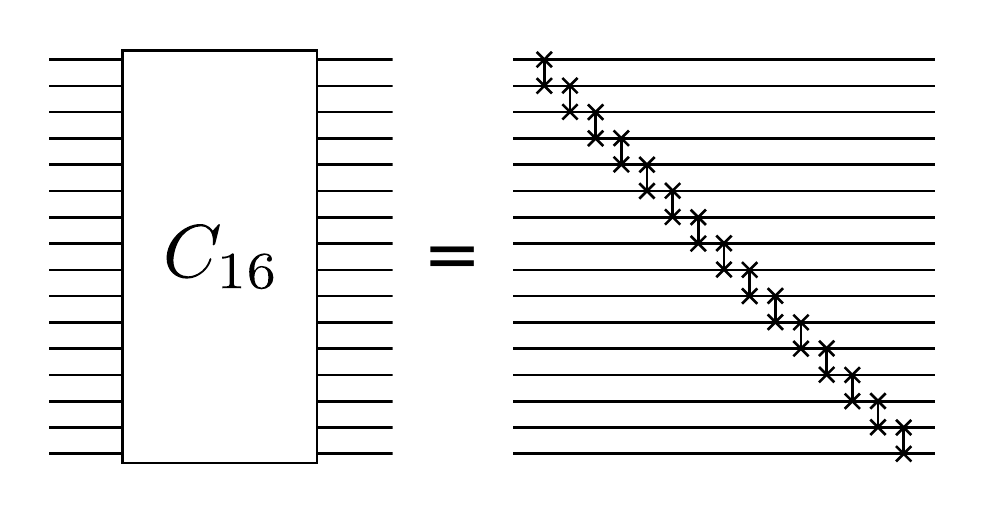}
    \caption{Decomposition of a 16-qubit cyclic permutation in terms of nearest-neighbor SWAP gates.
    }
    \label{fig:nntransp}
\end{figure}

Let us denote $C_k = (1, 2, 3, \ldots, k)$ as the cyclic permutation on $k$ elements, and $(i, k)$ as a permutation of two elements (or transposition) $i$ and $k$. A well-known decomposition of a cyclic permutation in terms of transpositions is given by
\begin{equation}
    (1, 2, 3, \ldots, k) = (1, 2)(2, 3)\ldots (k-1, k)
\end{equation}
This decomposition uses only adjacent transpositions, which is equivalent to swapping only nearest-neighbor qubits at a time. It leads to a circuit as exemplified in Fig.\ \ref{fig:nntransp}. This circuit uses the optimal number of $(k-1)$ nearest-neighbor transpositions and has depth $(k-1)$. That $(k-1)$ is the optimal number follows from the need to move the first element to the last position, which requires exactly $(k-1)$ nearest-neighbor transpositions.

It is possible to give an alternative decomposition that has much smaller depth, at the cost of using longer-range transpositions. To see it, consider first the following Lemma:
\begin{lemma} \label{lem:cycle}
Consider the following two disjoint cycles $(1,2,\ldots k)$ and $(k+1, k+2, \ldots, n)$. We have the following decomposition:
\begin{equation} \label{eq:cycledec}
    C_n (1,\ldots,n) = (1,k+1)(1,2,\ldots k)(k+1, k+2, \ldots, n)
\end{equation}
\end{lemma}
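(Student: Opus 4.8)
The plan is to verify the claimed identity $C_n = (1,k+1)\,(1,2,\ldots,k)\,(k+1,k+2,\ldots,n)$ directly by tracking where each element is sent. Composition of permutations here should be read right-to-left (apply the rightmost factor first), and the two cycles on the right are disjoint, so they commute and can be applied in either order before the transposition $(1,k+1)$ is applied last. I would split the check into three cases according to which block an element $x \in \{1,\ldots,n\}$ lies in, plus the two ``boundary'' elements $1$ and $k+1$ which are the only ones touched by the transposition.

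First I would handle a generic element $x$ in the interior of the first block, i.e. $2 \le x \le k-1$: the cycle $(1,2,\ldots,k)$ sends it to $x+1$, the cycle on the second block fixes it, and $(1,k+1)$ fixes $x+1$ (since $x+1 \le k$), so the composite sends $x \mapsto x+1$, agreeing with $C_n$. Symmetrically, for $x$ in the interior of the second block, $k+2 \le x \le n-1$, the first cycle fixes it, the second cycle sends $x \mapsto x+1$, and $(1,k+1)$ fixes $x+1$, so again $x \mapsto x+1$. Next the four special elements: $x=1$ goes $1 \mapsto 1$ (second cycle) $\mapsto 2$ (first cycle) $\mapsto 2$ (transposition), correct; $x=k$ goes $k \mapsto k$ (second cycle) $\mapsto 1$ (first cycle) $\mapsto k+1$ (transposition), which matches $C_n(k)=k+1$; $x=k+1$ goes $k+1 \mapsto k+2$ (second cycle) $\mapsto k+2$ (first cycle) $\mapsto k+2$ (transposition), correct; and $x=n$ goes $n \mapsto k+1$ (second cycle) $\mapsto k+1$ (first cycle) $\mapsto 1$ (transposition), matching $C_n(n)=1$. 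Since every element of $\{1,\ldots,n\}$ is accounted for and maps as under $C_n$, the two permutations are equal.

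There is essentially no obstacle here — the only thing to be careful about is the ordering convention for composing permutations and the edge cases when a block has size $1$ (e.g. $k=1$ or $k=n-1$), which should be checked to see that the formula degenerates gracefully; in those degenerate cases one of the ``cycles'' is the identity and the identity still holds by inspection. One could alternatively phrase the proof more conceptually: conjugating the product of the two disjoint cycles by nothing, and noting that left-multiplying two disjoint cycles $(1\cdots k)$ and $(k+1\cdots n)$ by the transposition $(1,k+1)$ ``splices'' them into a single cycle, a standard fact about how transpositions merge and split cycles. But the element-by-element verification is cleanest and fully self-contained, so that is the route I would take.
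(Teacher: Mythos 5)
Your element-by-element verification is correct and matches the paper's own argument, which likewise just checks the images of $1$, $k$, $k+1$, and $n$ (noting all other elements are trivially mapped correctly); you simply spell out the routine cases and the degenerate block sizes explicitly. No gaps.
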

The proof of this Lemma is straightforward, it simply requires checking the images of numbers $1, k, k+1$, and $n$ under the sequence of permutations, as all other numbers get trivially mapped to the correct places. 

\begin{figure}[t]
    \centering
    \includegraphics[width=0.4\textwidth]{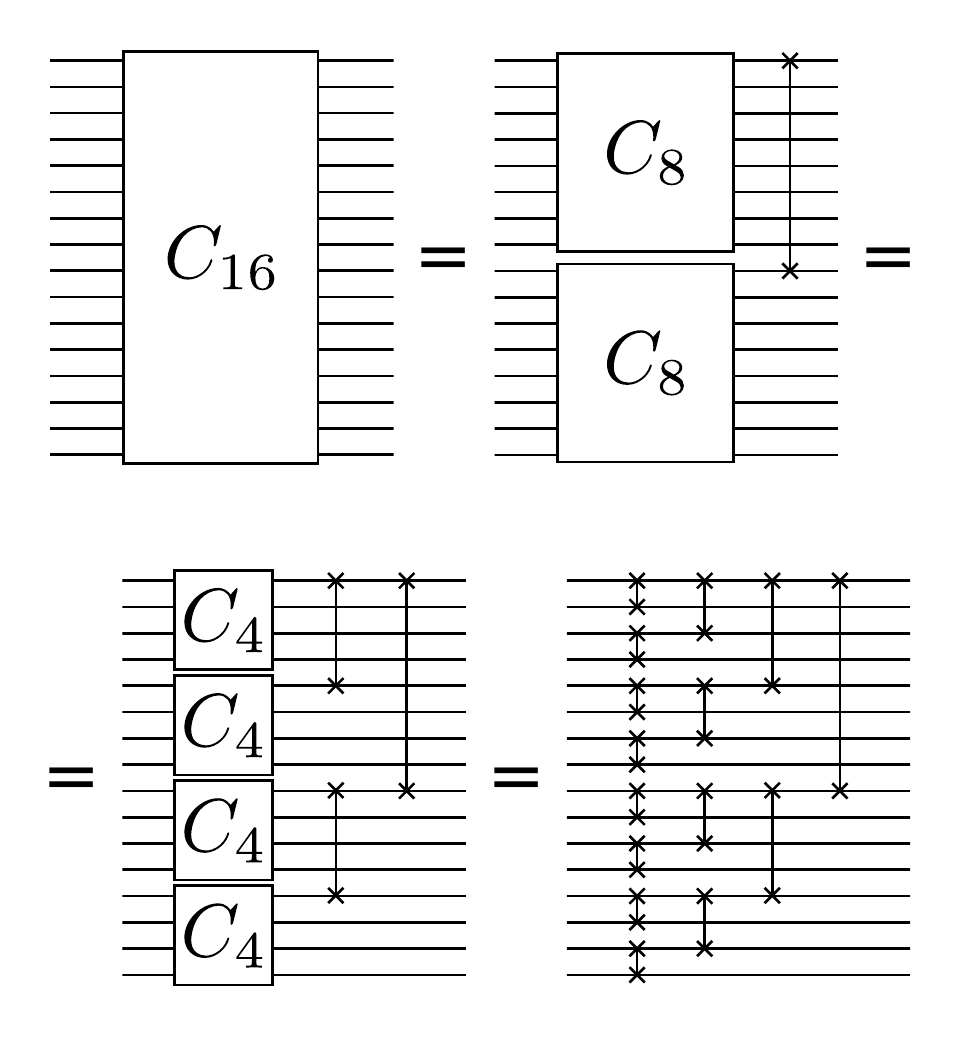}
    \caption{Decomposition of a 16-qubit cyclic permutation in terms of  SWAP gates. This circuit is much more parallelized compared to that of \cref{fig:nntransp}, although the trade-off is the requirement of long-range swaps.}
    \label{fig:cycles2}
\end{figure}

A very efficient decomposition of $C_m$ can now be obtained by applying \cref{lem:cycle} repeatedly. Suppose, for simplicity, that $m$ is a power of 2. Then we can first decompose $C_m$ into two cycles acting on $m/2$ elements followed by one transposition, i.e.\ $C_m = (1, m/2+1)(1, 2, \ldots, m/2)(m/2+1,m/2+2\ldots m)$. We then break each cycle of size $m/2$ again in two cycles, acting each on $m/4$ elements. This divide-and-conquer algorithm is iterated until all that is left are transpositions. It is easy to see that this procedure terminates after O$(\log m)$ steps, and hence that the final circuit has depth O$(\log m)$ as well. An example of this for $m=16$ is shown in \cref{fig:cycles2}. 

These are not the only decompositions of the cycle permutation in terms of transpositions, but they represent two extreme regimes. In the first, we have only nearest-neighbor SWAP gates, but the depth of the circuit is large. In the second, the depth is logarithmic, but longer-range gates are required. Which is the most efficient depends on the constraints of the particular architecture being used---the latter has the benefit of mitigating accumulation of errors in the circuit, but in particular architectures long-range gates might not be an option.

Let us now return to the issue of decomposing a \emph{controlled} cycle permutation. At first glance, neither decomposition we described has the promised properties. The circuit in \cref{fig:nntransp} is no longer composed only of nearest-neighboring controlled-SWAP (or Fredkin) gates, since the first qubit must control the swapping of states that are distant to it. At the same time, the circuit of \cref{fig:cycles2} does not have logarithmic depth since every gate in the circuit must be controlled by the same control qubit, which presumably means that all gates must be done sequentially. Nonetheless, both circuits can be adapted to restore the desired properties, as follows.

\begin{figure}[t]
    \centering
    \includegraphics[width=0.35\textwidth]{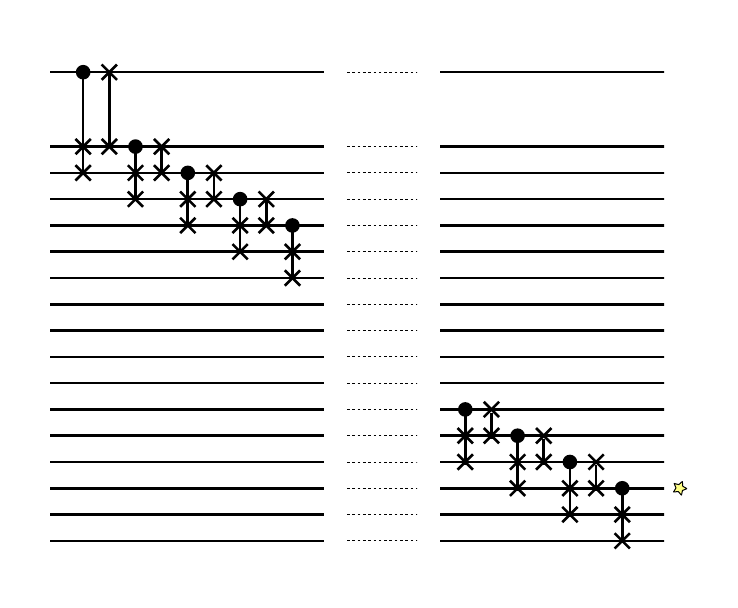}
    \caption{An adaptation of the circuit of \cref{fig:nntransp} that obtains a controlled cycle permutation with SWAP and Fredkin gates acting only on neighboring state. The first qubit is swapped along the 1D chain as each Fredkin gate is applied. The star at the output indicates where the control qubit is effectively left at the end of the circuit.}
    \label{fig:cyclesnnfix}
\end{figure}

In \cref{fig:cyclesnnfix} we show how the controlled version of circuit in \cref{fig:nntransp} can be implemented using only gates (SWAP and Fredkin) that act on neighboring qubits.  

To obtain a controlled version of \cref{fig:cycles2} that has depth O$(\log m)$, as in the original, we need to include more ancilla qubits to parallelize the control operation. Instead of controlling the permutations with a single $\tfrac{1}{\sqrt{2}} (\ket{0}+\ket{1})$, we need to build an $m/2$-qubit GHZ state $\tfrac{1}{\sqrt{2}} (\ket{0}^{\otimes m/2}+\ket{1}^{\otimes m/2})$, such that each qubit in this state can control a different SWAP gate in parallel. The whole construction still has logarithmic depth due to the follow Lemma (also cf.\ Figure 2 in \cite{Cruz2019}):
\begin{lemma} \cite{Cruz2019} \label{lem:GHZ}
The $m$-qubit GHZ state:
\begin{equation}
    \frac{1}{\sqrt{2}} (\ket{0}^{\otimes m}+\ket{1}^{\otimes m})
\end{equation}
can be generated by a circuit that acts on the $\ket{0}^{\otimes m}$ state, is composed only of a single Hadamard and a sequence of CNOT gates, and has depth O$(\log m)$.
\end{lemma}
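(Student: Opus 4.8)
The plan is to prove \cref{lem:GHZ} by an explicit recursive (divide-and-conquer) construction that doubles the number of qubits carrying the GHZ correlation at each round, so that after $O(\log m)$ rounds all $m$ qubits are entangled. First I would set up notation: write the $m$-qubit GHZ state as $\frac{1}{\sqrt2}(\ket{0}^{\otimes m}+\ket{1}^{\otimes m})$ and, for the inductive step, observe that if qubits in a subset $A$ already share the state $\frac{1}{\sqrt2}(\ket{0}^{\otimes|A|}+\ket{1}^{\otimes|A|})$ while the remaining qubits are in $\ket{0}$, then applying, for each qubit $a\in A$, a single CNOT with $a$ as control and a fresh $\ket{0}$ qubit as target produces $\frac{1}{\sqrt2}(\ket{0}^{\otimes 2|A|}+\ket{1}^{\otimes 2|A|})$ on the enlarged set. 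The base case is one Hadamard on a single qubit, producing $\frac{1}{\sqrt2}(\ket{0}+\ket{1})$.

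The key steps, in order, would be: (1) verify the single-CNOT fan-out identity $\mathrm{CNOT}\,(\alpha\ket{0}+\beta\ket{1})\otimes\ket{0} = \alpha\ket{00}+\beta\ket{11}$, and note more generally that applying $|A|$ such CNOTs in parallel on disjoint control-target pairs doubles the size of the GHZ block in a single layer of depth $1$; (2) argue that the CNOTs within one round act on disjoint qubit pairs, hence compose into a single circuit layer; (3) count depth: starting from $1$ qubit, after $k$ rounds we have $2^k$ qubits in the GHZ block, so $\lceil\log_2 m\rceil$ rounds plus the initial Hadamard suffice, giving depth $O(\log m)$ (when $m$ is not a power of two one simply stops fanning out to targets once the block reaches size $m$, which only removes gates and does not increase depth); (4) observe that the total gate set used is exactly one Hadamard and a sequence of CNOTs, as claimed. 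I would also remark that the construction is the standard one and refer to \cite{Cruz2019} for the analogous figure.

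The main obstacle here is essentially presentational rather than mathematical: the content is elementary, so the work is in stating the recursion cleanly enough that the depth bound is manifestly $O(\log m)$ while making the disjointness of the parallel CNOT layers explicit. A minor subtlety to handle carefully is the non-power-of-two case, where the last round is only partial; I would dispatch this in one sentence by noting that truncating the fan-out can only decrease the number of gates and layers. No deep tool is needed — correctness of each round is a one-line calculation on the two computational-basis branches $\ket{0}^{\otimes|A|}$ and $\ket{1}^{\otimes|A|}$ — so I expect the proof to be short, in keeping with the ``straightforward'' character already flagged for \cref{lem:cycle}.
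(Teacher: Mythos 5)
Your proposal is correct and is essentially the same construction as the paper's proof: a single Hadamard followed by rounds of parallel CNOTs in which every already-entangled qubit fans out to a fresh $\ket{0}$ qubit, doubling the GHZ block each layer and giving depth O$(\log m)$. The only difference is presentational — you phrase it as an explicit induction on the block size and spell out the non-power-of-two truncation, which the paper leaves implicit.
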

\begin{proof}
We initialize $m$ qubits in the $\ket{0}^{\otimes m}$ state, and apply a Hadamard gate to the first one. Our goal is now to apply a sequence of CNOT gates that will flip every qubit conditioned on the first qubit being $\ket{1}$ and does nothing otherwise. This can be done as follows. In the first round, we apply a CNOT gate from qubit 1 to qubit 2. In the second round, we apply two CNOTs, from qubits 1 and 2 to qubits 3 and 4, respectively. We repeat this procedure by reusing previously flipped qubits as controls for the next layer. The number of qubits flipped in this manner doubles with each layer. Hence, it follows that this will generate an $m$-qubit GHZ state in O$(\log m)$ layers of two-qubit gates.
\end{proof}
 
For the controlled version of \cref{fig:cycles2} to work properly when using the GHZ state as the ancilla, we need to perform the operation described in the proof of \cref{lem:cycle} in reverse at the end of the cycle test protocol. This is required to disentangle the ancillas with the control qubit, such that interference can occur properly at the last $H$ gate in \cref{fig:swap}. In any case, it follows that the entire controlled-cycle protocol works in depth O$(\log m)$.

\section{Bargmann invariants for almost pure states }\label{app:stability}

In experiments to measure Bargmann invariants, the preparations will result in mixed states that only approximate the target pure states.
 Here we prove two lemmas that show that Bargman invariants computed for mixed states with high purity will have values close to those of the ideal, target pure states. 

\begin{lemma} Let $\rho$ be a mixed state on  $\mathbb{C}^d$ (with arbitrary dimensionality $d$), with $\Tr(\rho^2)\ge 1-\epsilon$ for some $\epsilon \le 1/2$. Let $\psi$ be the normalized projector onto the eigenspace with largest eigenvalue of $\rho$. Then $||\rho-\psi||_1 \le 2\epsilon$, where $\|X\|_1=\Tr(\sqrt{XX^\dagger})$ denotes the trace norm.
\end{lemma}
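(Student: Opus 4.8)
The plan is to diagonalize $\rho$ and reduce everything to an elementary estimate on its eigenvalues. Write $\rho = \sum_{i=1}^d \lambda_i \ketbra{e_i}{e_i}$ with $\lambda_1 \ge \lambda_2 \ge \cdots \ge \lambda_d \ge 0$ and $\sum_i \lambda_i = 1$. The one observation that does all the work is the bound
\begin{equation}
  1 - \epsilon \;\le\; \Tr(\rho^2) \;=\; \sum_{i=1}^d \lambda_i^2 \;\le\; \lambda_1 \sum_{i=1}^d \lambda_i \;=\; \lambda_1 ,
\end{equation}
so the largest eigenvalue satisfies $\lambda_1 \ge 1-\epsilon$. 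Since $\epsilon \le 1/2$ this gives $\lambda_1 \ge 1/2$, and for $\epsilon < 1/2$ the top eigenvalue is strictly non-degenerate, so $\psi = \ketbra{e_1}{e_1}$ is exactly the rank-one projector in the statement.

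Next I would evaluate the trace norm directly in the common eigenbasis. There $\rho - \psi$ is diagonal with eigenvalues $\lambda_1 - 1 \le 0$ and $\lambda_2, \dots, \lambda_d \ge 0$, hence
\begin{equation}
  \|\rho - \psi\|_1 \;=\; (1 - \lambda_1) + \sum_{i=2}^d \lambda_i \;=\; (1-\lambda_1) + (1-\lambda_1) \;=\; 2(1-\lambda_1),
\end{equation}
using $\sum_{i\ge 2}\lambda_i = 1-\lambda_1$. Combining with $\lambda_1 \ge 1-\epsilon$ yields $\|\rho-\psi\|_1 = 2(1-\lambda_1) \le 2\epsilon$, which is the claim.

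I would close with a short remark covering the (boundary) degenerate case: if the top eigenvalue has multiplicity $k$, then $\psi = \tfrac1k\sum_{i=1}^k\ketbra{e_i}{e_i}$, the same computation gives $\|\rho-\psi\|_1 = 2(1-k\lambda_1) \le 2(1-\lambda_1) \le 2\epsilon$, and in fact $k\lambda_1 \le 1$ together with $\lambda_1 \ge 1-\epsilon$ forces this to occur only when $\epsilon = 1/2$ and $k = 2$, where the bound is trivially $0$. There is essentially no obstacle here — the lemma is a one-line spectral estimate; the only mild point of care is the well-definedness of $\psi$ as a rank-one projector, which is precisely what the hypothesis $\epsilon \le 1/2$ guarantees, and the only "trick" is the inequality $\Tr(\rho^2) \le \lambda_1$, which is immediate from $\sum_i\lambda_i^2 \le \lambda_1\sum_i\lambda_i$.
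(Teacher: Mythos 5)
Your proof is correct and follows essentially the same route as the paper's: evaluate $\|\rho-\psi\|_1 = 2(1-\lambda_1)$ in the eigenbasis and lower-bound the top eigenvalue from the purity hypothesis. The only difference is the intermediate step: the paper bounds $\Tr(\rho^2)\le \lambda^2+(1-\lambda)^2$ and solves the quadratic to get the slightly sharper $\lambda\ge\frac{1}{2}\left(1+\sqrt{1-2\epsilon}\right)$, and hence $\|\rho-\psi\|_1\le 2\epsilon/\left(1+\sqrt{1-2\epsilon}\right)$, whereas your one-line inequality $\Tr(\rho^2)\le\lambda_1$ gives the weaker but fully sufficient $\lambda_1\ge 1-\epsilon$; your explicit handling of the degenerate boundary case $\epsilon=1/2$ is a small point of care that the paper's proof glosses over.
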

\begin{proof}: Let $\lambda$ be the maximal eigenvalue of $\rho$. Write $\rho=\lambda \psi + (1-\lambda)G$ (with $G \perp \psi$). Then
\begin{align}
    \Tr(\rho^2)&=\lambda^2 + (1-\lambda)^2\Tr(G^2)\\&\le \lambda^2+(1-\lambda)^2\\&=2\lambda^2-2\lambda+1.
\end{align}
(as the maximal value for $\Tr(G^2)=1$). Then we have:
\begin{align}
    &\Tr(\rho^2)\ge 1-\epsilon  \notag\\
    &\Rightarrow 2\lambda(\lambda-1) \ge -\epsilon.\\
    &\Rightarrow \lambda  \ge \frac{1}{2}\left(1+\sqrt{1-2\epsilon} \right).
\end{align}
As a result we obtain
\begin{align}
    \|\rho-\psi\|_1&=\|(\lambda - 1)\psi+(1-\lambda)G \|_1 \\ 
    &=2(1-\lambda) \\ &\le 1-\sqrt{1-2\epsilon} \\ &=\frac{2\epsilon}{1+\sqrt{1-2\epsilon}}\le 2\epsilon,
\end{align}
as we wanted to demonstrate. 
\end{proof}

\begin{lemma}\label{lem:stababBERG} Assume we have two $m$-tuples of (generally mixed) quantum states $( \rho_1,\ldots,\rho_m)$,  $( \sigma_1,\ldots, \sigma_m)$  which satisfy  $||\rho_i - \sigma_i ||_1 \le \delta$ for $i=1,2,\ldots,m$. Then,
\begin{equation}
    \left|\Tr\left(\prod_{i=1}^m \rho_i\right)-\Tr\left(\prod_{i=1}^m \sigma_i\right)\right|\leq m \delta\ .
\end{equation}
\end{lemma}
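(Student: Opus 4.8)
The plan is to prove the bound by a telescoping (hybrid) argument, swapping one factor at a time from $\rho_i$ to $\sigma_i$ and controlling each swap by $\delta$.

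\textbf{Setup.} Define the hybrid products $P_k \eqdef \sigma_1 \sigma_2 \cdots \sigma_k \, \rho_{k+1} \cdots \rho_m$ for $k = 0, 1, \ldots, m$, so that $P_0 = \prod_{i=1}^m \rho_i$ and $P_m = \prod_{i=1}^m \sigma_i$. By the triangle inequality,
\begin{equation}
  \left| \Tr(P_0) - \Tr(P_m) \right| \le \sum_{k=1}^m \left| \Tr(P_{k-1}) - \Tr(P_k) \right| .
\end{equation}
Each consecutive difference $P_{k-1} - P_k$ differs only in the $k$-th slot, so I would factor it as $P_{k-1} - P_k = A_k (\rho_k - \sigma_k) B_k$, where $A_k = \sigma_1 \cdots \sigma_{k-1}$ and $B_k = \rho_{k+1} \cdots \rho_m$ (with the obvious empty-product conventions for $k=1$ and $k=m$).

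\textbf{Bounding each term.} The main estimate is then $\left| \Tr\big( A_k (\rho_k - \sigma_k) B_k \big) \right| \le \| \rho_k - \sigma_k \|_1 \le \delta$. To justify the first inequality I would use that $|\Tr(XY)| \le \|X\|_1 \|Y\|_\infty$ together with submultiplicativity of the operator norm under products and the fact that every $\rho_i, \sigma_i$ is a density matrix, hence $\|\rho_i\|_\infty \le 1$ and $\|\sigma_i\|_\infty \le 1$; this gives $\|B_k A_k\|_\infty \le 1$, and by cyclicity of the trace $\Tr\big(A_k(\rho_k-\sigma_k)B_k\big) = \Tr\big((\rho_k-\sigma_k) B_k A_k\big)$, so $|\Tr(\cdots)| \le \|\rho_k - \sigma_k\|_1 \|B_k A_k\|_\infty \le \delta$. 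Summing over the $m$ terms yields the claimed bound $m\delta$.

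\textbf{Main obstacle.} There is no serious obstacle here; the only point requiring a little care is getting the operator-norm/trace-norm Hölder-type inequality $|\Tr(XY)| \le \|X\|_1\|Y\|_\infty$ applied in the right place and making sure the product of the remaining density matrices really has operator norm at most $1$ (which is immediate since each factor does and the operator norm is submultiplicative). One could alternatively phrase the per-term bound directly via $|\Tr(A_k (\rho_k-\sigma_k) B_k)| \le \|\rho_k - \sigma_k\|_1 \|A_k\|_\infty \|B_k\|_\infty$, which is perhaps cleaner. Either way the argument is short, and it works verbatim for mixed states, so no purity assumption is needed in this Lemma.
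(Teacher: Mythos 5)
Your proof is correct, but it takes a somewhat different route from the paper's. You telescope directly on the operator products, writing the hybrid chain $P_k=\sigma_1\cdots\sigma_k\,\rho_{k+1}\cdots\rho_m$ and bounding each swap by $|\Tr(A_k(\rho_k-\sigma_k)B_k)|\le\|\rho_k-\sigma_k\|_1\,\|B_kA_k\|_\infty\le\delta$, which relies on H\"older plus $\|\rho_i\|_\infty,\|\sigma_i\|_\infty\le 1$ and submultiplicativity of the operator norm. The paper instead first rewrites the Bargmann invariant as $\Tr\bigl(C_m\,\bigotimes_{i=1}^m\rho_i\bigr)$, i.e.\ the expectation of the cycle permutation on the tensor product (the same representation that underlies the cycle test), and then applies the same H\"older inequality with $\|C_m\|_\infty=1$ together with subadditivity of the trace norm for tensor-product differences, $\|\bigotimes_i\rho_i-\bigotimes_i\sigma_i\|_1\le\sum_i\|\rho_i-\sigma_i\|_1$. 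The telescoping you make explicit is hidden inside that subadditivity step (there it runs over tensor factors and uses $\|\rho_i\|_1=1$ rather than $\|\rho_i\|_\infty\le 1$). Your version is more elementary and self-contained, needing no permutation-operator identity or external subadditivity fact; the paper's version is slightly shorter and ties the estimate to the cycle-permutation picture used throughout the paper. Both correctly yield the bound $m\delta$, and you are right that no purity assumption is needed.
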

\begin{proof}
We use the formula expressing Bargmann invariant in terms of expectation value of cycle permutation $C_m$ on tensor product of the relevant quantum states
\begin{equation}
\Tr\left(\prod_{i=1}^m \sigma_i\right)=\Tr\left(C_m \otimes_{i=1}^m \sigma_i\right)\ .
\end{equation}
It follows that 
\begin{align}
     &\left|\Tr\left(\prod_{i=1}^m \rho_i\right)-\Tr\left(\prod_{i=1}^m \sigma_i\right)\right| \\
     &= 
     \left| \Tr\left[C_m \left(\bigotimes_{i=1}^m \rho_i - \bigotimes_{i=1}^m \sigma_i\right)\right] \right| \\
   & \leq \|C_m\|_\infty  \|\otimes_{i=1}^m \rho_i - \otimes_{i=1}^m \sigma_i \|_1 \label{eq:traceNORM}\\
   & \leq m \delta\ \label{eq:subadd},
\end{align}
where in \eqref{eq:traceNORM} we employed  a well-known inequality $|\Tr(XY)|\leq \|X\|_\infty \|Y\|_1$, while in \eqref{eq:subadd} we used $\|C_m\|_\infty=1$ ($C_m$ is a unitary operator) and subaddtitivity of 1-norm: 
\begin{equation}
    \|\otimes_{i=1}^m \rho_i - \otimes_{i=1}^m \sigma_i \|_1 \leq \sum_{i=1}^m \|\rho_i -\sigma_i\|_1 \leq m \delta\ .
\end{equation}
\end{proof}

Combining the above two lemmas we get that Bargmann invariants computed for mixed states with high purity approximate Bargmann invariants computed for collection of pure states corresponding to projectors onto eigenvectors corresponding to the highest eigenvalue. 

\begin{corr}
Let $\mathcal{S}=\lbrace\rho_1,\ldots,\rho_N\rbrace$ be collection of mixed states that satisfy $\Tr(\rho_i^2)\geq 1-\epsilon$ for $\epsilon\in[0,1/2]$ and let $\Psi=\lbrace \psi_1,\ldots,\psi_N\rbrace$, where $\psi_i$ is a projector onto  an eigenvector corresponding to the largest eigenvalue of  $\rho_i$. Then for for arbitrary sequence $i_k\in\lbrace{1,\ldots,N\rbrace}$ of length $m$ we have
\begin{equation}
|\Tr(\rho_{i_1}\rho_{i_2} \ldots \rho_{i_m}) - \Tr(\psi_{i_1}\psi_{i_2}\ldots \psi_{i_m}) | \leq 2 m \epsilon.
\end{equation}
 
\end{corr}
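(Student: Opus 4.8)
The plan is to combine the two preceding lemmas in the obvious way, using the triangle inequality to bridge the gap between the all-mixed invariant and the all-pure invariant through a telescoping sequence of hybrid products. First I would apply the first Lemma to each state in the collection: since $\Tr(\rho_i^2)\ge 1-\epsilon$ with $\epsilon\le 1/2$, and $\psi_i$ is by definition the normalized projector onto the top eigenspace of $\rho_i$, we get $\|\rho_i-\psi_i\|_1\le 2\epsilon$ for every $i$. This is precisely the hypothesis needed for Lemma~\ref{lem:stababBERG}, with the role of the two tuples played by $(\rho_{i_1},\ldots,\rho_{i_m})$ and $(\psi_{i_1},\ldots,\psi_{i_m})$ and with $\delta=2\epsilon$.

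Next I would invoke Lemma~\ref{lem:stababBERG} directly with these two $m$-tuples. Note that the lemma is stated for arbitrary $m$-tuples of states, so there is no issue with the indices $i_k$ repeating — the tuple $(\rho_{i_1},\ldots,\rho_{i_m})$ is a perfectly valid $m$-tuple even if some indices coincide, and the pairwise bound $\|\rho_{i_k}-\psi_{i_k}\|_1\le 2\epsilon$ holds regardless. The conclusion of Lemma~\ref{lem:stababBERG} then reads
\begin{equation}
\left|\Tr\left(\prod_{k=1}^m \rho_{i_k}\right)-\Tr\left(\prod_{k=1}^m \psi_{i_k}\right)\right|\le m\,(2\epsilon)=2m\epsilon,
\end{equation}
which is exactly the claimed bound. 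So the corollary is essentially a one-line consequence once the two lemmas are in place.

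There is no real obstacle here — the only thing worth being slightly careful about is the bookkeeping of the constant: the first Lemma gives $\|\rho_i-\psi_i\|_1\le 2\epsilon$ (not $\epsilon$), and feeding $\delta=2\epsilon$ into the factor-of-$m$ bound of Lemma~\ref{lem:stababBERG} is what produces the final $2m\epsilon$ rather than $m\epsilon$. One should also confirm that the hypothesis $\epsilon\in[0,1/2]$ is what makes the first Lemma applicable (it requires $\epsilon\le 1/2$ so that $\sqrt{1-2\epsilon}$ is real and the eigenvalue bound holds). Beyond that, the proof is just ``apply Lemma~1 to each state, then apply Lemma~\ref{lem:stababBERG}.''
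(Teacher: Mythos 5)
Your proposal is correct and matches the paper exactly: the paper states the corollary as an immediate consequence of ``combining the above two lemmas,'' i.e.\ the purity lemma gives $\|\rho_i-\psi_i\|_1\le 2\epsilon$ for each state, and Lemma~\ref{lem:stababBERG} applied with $\delta=2\epsilon$ to the tuples $(\rho_{i_1},\ldots,\rho_{i_m})$ and $(\psi_{i_1},\ldots,\psi_{i_m})$ yields the bound $2m\epsilon$. Your bookkeeping of the constant and the remark about repeated indices are both consistent with the paper's intent.
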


\section{Stability of PU equivalence from Bargmann invariants} \label{app:pustab}

The following result is a quantitative variant of Theorem \ref{thm:PureSates}---it shows two collections of pure states $\Psi$,$\Phi$ can be transformed onto each other approximately provided the relevant Bargmann invariants approximately match. 

\begin{lemma}[Robust Characterization of PU equivalence via Bargmann invariants]
Let $\Psi, \Phi$ be two tuples of pure states with $m$ elements each: 
\begin{equation}
    \Psi = ( \psi_1,\psi_2,\ldots, \psi_N ),  \Phi = ( \phi_1,\phi_2,\ldots, \phi_N )\ .
\end{equation}
 Moreover, assume that Bargmann invariants used to construct $\tilde{G}^\Psi,\tilde{G}^\Phi$ are such that
  \begin{equation}
    \mathrm{rank}(\tilde{G}^\Psi)\geq \mathrm{rank}(\tilde{G}^\Phi),\   \|\tilde{G}^\Psi-\tilde{G}^\Phi\|_2 \|(\tilde{G}^\Psi)^\ddag\|_\infty^{\frac{1}{2}}\leq \frac{1}{2},
  \end{equation}
where, $\|\cdot\|_\infty, \|\cdot\|_2$ denote operator and Frobenius norm respectively and $A^\ddag$ is a Moore-Penrose pseudo inverse \cite{ProcrustesBound}. Then, there exist a unitary $U$ such that
\begin{equation}\label{eq:appBound}
   \frac{1}{N} \sum_{i=1}^N \left(1-\Tr( U \psi_i U^\dagger \phi_i  )\right) \leq \frac{C_\Psi \|\tilde{G}^\Psi-\tilde{G}^\Phi\|^2_2}{N} \,
\end{equation}
where $C_\Psi= (1+\sqrt{2})^2 \|(\tilde{G}^\Psi)^\ddag\|_\infty$. 
\end{lemma}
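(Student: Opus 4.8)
The plan is to reduce the statement to a perturbation bound for the Procrustes problem, exploiting the fact that the Gram matrix $\tilde{G}^\Psi$ constructed in the proof of \cref{thm:PureSates} completely encodes the PU-equivalence class of $\Psi$. First I would recall that if $\mathcal{A}=(\ket{\psi_i})_{i=1}^N$ is a choice of wave-function representatives realizing the gauge conventions of the spanning-tree construction, and $V_\Psi$ is the $d\times N$ matrix whose columns are the $\ket{\psi_i}$, then $\tilde{G}^\Psi = V_\Psi^\dagger V_\Psi$; similarly $\tilde{G}^\Phi = V_\Phi^\dagger V_\Phi$ for an analogous choice $V_\Phi$. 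The task of finding a unitary $U$ mapping $\psi_i$ close to $\phi_i$ becomes the task of finding $U$ with $U V_\Psi \approx V_\Phi$ (up to residual diagonal phases, which I will argue can be absorbed since the gauge was fixed the same way on both sides and the relevant invariants nearly agree), and the average infidelity $\frac1N\sum_i (1-\Tr(U\psi_i U^\dagger \phi_i))$ is controlled by $\frac1N\|UV_\Psi - V_\Phi\|_2^2$ via the elementary bound $1-|\braket{\chi}{\xi}|^2 \le \||\chi\rangle-|\xi\rangle\|^2$ (for unit vectors, after fixing relative phase), so it suffices to bound the latter.

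Next I would invoke the known perturbation bound for the unitary Procrustes problem (the reference \cite{ProcrustesBound} cited in the statement): given two matrices with nearby Gram matrices, there is a unitary nearly aligning them, with the error controlled by $\|\tilde{G}^\Psi - \tilde{G}^\Phi\|_2$ and the conditioning factor $\|(\tilde{G}^\Psi)^\ddag\|_\infty$. Concretely, writing $G = \tilde{G}^\Psi$, $H=\tilde{G}^\Phi$, one uses that $V_\Phi = W (H)^{1/2}$ and $V_\Psi = W' G^{1/2}$ for isometries $W,W'$ with matching column spaces' dimensions (here the rank hypothesis $\mathrm{rank}(\tilde{G}^\Psi)\ge \mathrm{rank}(\tilde{G}^\Phi)$ guarantees the target fits inside the source support), and then the optimal $U$ is obtained from the polar factor of $V_\Psi V_\Phi^\dagger$. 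The spectral hypothesis $\|G-H\|_2\,\|G^\ddag\|_\infty^{1/2}\le \tfrac12$ is exactly what is needed to keep $H$ inside the regime where $H^{1/2}$ is a Lipschitz-controlled perturbation of $G^{1/2}$ on the relevant subspace and where the pseudo-inverse does not blow up; it yields $\|V_\Psi - U^\dagger V_\Phi\|_2 \le (1+\sqrt2)\|G^\ddag\|_\infty^{1/2}\|G-H\|_2$. Squaring and dividing by $N$ then gives \eqref{eq:appBound} with $C_\Psi = (1+\sqrt2)^2\|(\tilde{G}^\Psi)^\ddag\|_\infty$.

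I would assemble these in the order: (i) translate infidelity to vector-norm distance; (ii) translate vector-norm distance to the Procrustes functional $\min_U \|UV_\Psi - V_\Phi\|_2$; (iii) apply the Gram-matrix perturbation/Procrustes bound to control the Procrustes functional by $\|\tilde{G}^\Psi - \tilde{G}^\Phi\|_2$; (iv) track the constants to recover $C_\Psi$. The main obstacle I anticipate is step (i)--(ii), namely handling the residual phase/gauge freedom rigorously: the Gram matrices $\tilde{G}^\Psi,\tilde{G}^\Phi$ are only defined after fixing representatives via the spanning tree, so one must check that the \emph{same} gauge choices applied to both tuples make $\tilde{G}^\Psi - \tilde{G}^\Phi$ genuinely small whenever the measured invariants are close (this uses that the phases of off-tree entries are determined by Bargmann invariants that are assumed to nearly match), and that one does not pay an extra $N$-dependent factor from $N$ independent phase misalignments. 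Once that bookkeeping is in place, the rest is the standard Procrustes estimate applied verbatim.
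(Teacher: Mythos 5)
Your proposal follows essentially the same route as the paper: fix vector representatives compatible with the spanning-tree Gram matrices $\tilde{G}^\Psi,\tilde{G}^\Phi$, invoke the Procrustes stability theorem of the cited reference to get a unitary $U$ with $\frac{1}{N}\sum_i\|U\ket{\psi_i}-\ket{\phi_i}\|^2$ bounded by $(1+\sqrt{2})^2\|(\tilde{G}^\Psi)^\ddag\|_\infty\|\tilde{G}^\Psi-\tilde{G}^\Phi\|_2^2/N$, and then pass to infidelity via $1-\Tr(U\psi_i U^\dagger\phi_i)\leq\|U\ket{\psi_i}-\ket{\phi_i}\|^2$. The gauge worry you flag is not actually an obstacle here, since that last inequality holds without any phase alignment and the hypothesis of the lemma is stated directly on $\|\tilde{G}^\Psi-\tilde{G}^\Phi\|_2$ (with both Gram matrices built from the same frame graph and spanning tree, as the paper's remark notes).
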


In other words, closeness of Bargmann invariants defining Gram matrices $\tilde{G}^\Psi,\tilde{G}^{\Phi}$ implies that under a suitable unitary transformation $U$ the  average-state infidelity between states $\Psi$ and $\Phi$ can be upper bounded as in \eqref{eq:appBound}. 
\begin{remark}
We note that in order to construct Gram matrices $\tilde{G}^\Psi, \tilde{G}^\Phi$ we need to assume that $\tr(\psi_i \psi_j) \neq 0$ if and only if $\tr(\phi_i \phi_j) \neq 0$ (in other words that frame graphs  defined in the proof of Theorem \ref{thm:PureSates} are identical, $\Gamma^\Psi=\Gamma^\Phi$). Furthermore, we assume that both matrices $\tilde{G}^\Psi, \tilde{G}^\Phi$ are constructed using the same spanning tree of the corresponding frame graph.
\end{remark}

\begin{proof}
The proof follows from the stability of the celebrated Procrustes problem discussed in \cite{ProcrustesBound}. In Theorem 1 from that work it was shown that if Gram matrices of two tuples of vectors  
\begin{align}
\mathcal{A}&=\left(\ket{\psi_1}, \ket{\psi_2},\dots \ket{\psi_N} \right),\\
\mathcal{B}&=\left(\ket{\phi_1}, \ket{\phi_2},\dots \ket{\phi_N} \right),
\end{align}
satisfy 
  \begin{equation}
    \mathrm{rank}(G^\mathcal{A})\geq \mathrm{rank}(G^\mathcal{B}),\   \|G^\mathcal{A}-G^\mathcal{B}\|_2 \|(G^\mathcal{A})^\ddag\|_\infty^{\frac{1}{2}}\leq \frac{1}{2}\ ,
  \end{equation}
then there exist a unitary transformation $U$ such that 
\begin{equation}\label{eq:appBoundstate}
   \frac{1}{N} \sum_{i=1}^N \|U\ket{\psi}-\ket{\phi}\|^2   \leq \frac{(1+\sqrt{2})^2 \|(G^\mathcal{A})^\ddag\|}{N}  \|G^\mathcal{A}-G^\mathcal{B}\|^2_2 \ .
\end{equation}
We can now apply this result directly to Gram matrices $\tilde{G}^\Psi$, $\tilde{G}^\Phi$ constructed form Bargmann invariants (and vector representatives of states form $\Psi,\Phi$ compatible with these Graam matrices). To pass from \eqref{eq:appBoundstate} to \eqref{eq:appBound} we note that for normalized wavefunctions $\ket{\psi},\ket{\phi}$ we have $  1-\Tr( U \psi_i U^\dagger \phi_i  ) \leq  \|U\ket{\psi_i}-\ket{\phi_i}\|^2$.
\end{proof}

\section{Complete PU invariants for mixed states} \label{app:mixedInvariants}

In this Appendix we show that Bargmann invariants can be used to decide unitary equivalence of two tuples of (generically mixed) quantum states $\mathcal{S}=({\rho_i)}_{i=1}^N, \mathcal{S'}=({\sigma_i)}_{i=1}^N$. This result was previously established in \cite{PROCESI1976306}. Here, we present a self-consistent proof based on Schur-Weyl duality. After presenting this general result, we remark on the connection of the unitary equivalence problem with the question of simultaneous conjugation of tuples of matrices. This connection allows us to state bounds on the degrees and number of independent Bargmann invariants. 

We begin by introducing polynomial invariants of linear group actions of compact groups in real vector spaces (see e.g. \cite{fulton1991representation,goodman2009symmetry} for an introduction to representation theory and invariant theory). Let $G$ be a compact group and let 
\begin{equation}
    \Pi:G\ni g\mapsto \Pi_g\in \mathrm{End}(V)
\end{equation}
be a representation of $G$ in a finite dimensional real vector space $V$ \footnote{Typically, in the context of invariant theory one  considers \emph{complex} vector spaces. In that setting the derivation of invariant polynomials is slightly more complicated - this is a consequence of the fact that even for one dimensional complex space $\mathbb{C}$, general polynomials can be understood as two variate polynomials in $\mathrm{Re}z,\mathrm{Im}z$. We decided to limit ourselves to \emph{real} vector spaces to avoid unnecessary technical complication. }. Since $G$ is compact we can assume that $\Pi_g$ is an orthogonal transformation. Homogeneous polynomials of degree $m$ in $V$ are mappings of the form 
\begin{equation}
    p(v)=\langle \tilde{p}, v^{\otimes m}\rangle\ ,
\end{equation}
where $\langle\cdot,\cdot\rangle$ is the invariant inner product in $V^{\otimes k}$ (induced by the inner product on $V$), and $\tilde{p}\in V^{\otimes m}$ is a tensor encoding the polynomial $p$. We will denote by $\mathbb{P}_k(V)$ the space of real homogeneous polynomials on $V$ of degree $m$. By $\mathbb{P}(V)$ we will denote the space off all polynomials on $V$.  

Invariant homogeneous polynomials of degree $m$ are polynomials that additionally satisfy 
\begin{equation}\label{eq:invPOLY}
 p(\Pi_g v)= p(v) \ \text{for every } v\in V, g\in G\ .
\end{equation}
Using orthogonality of $\Pi_g$ it can be shown that \eqref{eq:invPOLY} is equivalent to requiring
\begin{equation}\label{eq:invPOLYalt}
\Pi_g^{\otimes m} \tilde{p}= \tilde{p} \ \text{for every } g\in G\ .
\end{equation}
We will denote the set of all invariant polynomials by $\mathbb{P}(V)^G$.
It is a well known result in invariant theory that in the case of compact groups we can use invariant polynomials $\mathbb{P}(V)^G$ to decide about equivalence of elements of $V$ under the action of $G$. In other words  for $v,w\in V$ we have  $w=\Pi_g v$, for some $g\in G$ if and only if for every invariant polynomial $p\in\mathbb{P}(V)^G$ we have $p(w)=p(v)$ (see e.g. Theorem 2.4.1 in \cite{vranaPHD} for the elementary exposition in the context of entanglement theory).

The problem of unitary equivalence of $N$ tuples of density matrices  in $\mathbb{C}^d$ can be presented using the formalism outlined above. Let 
\begin{equation}
    V=\overbrace{\mathrm{Herm}(\mathbb{C}^d)\oplus \ldots \oplus \mathrm{Herm}(\mathbb{C}^d)}^{N\  \textrm{times }}\ ,
\end{equation}
where $\mathrm{Herm}(\mathbb{C}^d)$ denotes a (real) vector space of Hermitian matrices on $\mathbb{C}^d$. Every $\mathbf{X}\in V$ can be identified with a tuple of linear operators:
\begin{equation}
    \mathbf{X}=(X_1,\ldots,X_N)\ .
\end{equation}
Let $\mathrm{U}(d)$ denote the group of unitary matrices in $\mathbb{C}^d$. This group can be represented on $V$ via simultaneous conjugation:
\begin{equation}\label{eq:defREP}
\Pi_U\left[(X_1,\ldots, X_N)  \right] = \left(\mathrm{Ad}_U[X_1],\ldots, \mathrm{Ad}_U[X_N] \right)\ , 
\end{equation}
where $\mathrm{Ad}_U[X]=U XU^\dagger$ is the adjoint representation of $\mathrm{U}(d)$ on $\mathrm{Herm}(\mathbb{C}^d)$. We remark that the problem of orbit classification with respect to this so-defined representation of $\mathrm{U}(d)$  corresponds \emph{exactly} to the problem of unitary equivalence of tuples of mixed states $\mathcal{S}=(\rho_1,\ldots,\rho_N),\mathcal{S}'=(\sigma_1,\ldots,\sigma_N)$ discussed in the main text.  Using the unitary equivalence of vector spaces
\begin{equation}
    \overbrace{\mathrm{Herm}(\mathbb{C}^d)\oplus \ldots \oplus \mathrm{Herm}(\mathbb{C}^d)}^{N\  \textrm{times }} \simeq \mathrm{Herm}(\mathbb{C}^d) \otimes \mathbb{R}^N\ ,
\end{equation}
we can represent $\mathbf{X}\in V$ via
\begin{equation}
    \mathbf{X}=\sum_{i=1}^N X_i\otimes \ket{i}\ ,
\end{equation}
where $\lbrace{\ket{i}\rbrace}_{i=1}^N$ is an orthogonal basis of $\mathbb{R}^N$. In this language the representation $\Pi_U$ defined in \eqref{eq:defREP} is equivalent to
\begin{equation}\label{eq:multiplicityAd}
    \Pi_U \simeq \mathrm{Ad}_U\otimes \mathbb{I}_N\ ,
\end{equation}
where $\mathbb{I}_N$ is the identity operator on $\mathbb{R}^N$. We note that an invariant inner product with respect to representation $\Pi_U$ can be chosen as
\begin{equation}\label{eq:innerprod}
 \langle \mathbf{X}, \mathbf{Y} \rangle = \sum_{i=1}^N \Tr(X_i Y_i)\ .
\end{equation}

The equivalence \eqref{eq:multiplicityAd} allows us to conveniently describe conditions that $\tilde{p}\in V^{\otimes m}$ has to satisfy in order for it to define an invariant polynomial of degree $m$ (cf. Eq. \eqref{eq:invPOLYalt}). Indeed, upon reordering the factors in the tensor product we have 
\begin{equation}\label{eq:trensorEquiv}
V^{\otimes m}\simeq\left(\mathrm{Herm}(\mathbb{C}^d)\otimes \mathbb{R}^N \right)^{\otimes m} \simeq  \mathrm{Herm}(\mathbb{C}^d)^{\otimes m} \otimes (\mathbb{R}^N)^{\otimes m}\ ,
\end{equation}
\begin{equation}
    \Pi_U ^{\otimes m} \simeq \left(  \mathrm{Ad}_U\otimes \mathbb{I}_N\right)^{\otimes m} \simeq \mathrm{Ad}_U^{\otimes N} \otimes \mathbb{I}^{\otimes m}_N\ .
\end{equation}
Using \eqref{eq:trensorEquiv} we get that an arbitrary element of $\tilde{p}\in V^{\otimes m}$ can be written as
\begin{equation}\label{eq:explicitFORM}
 \tilde{p} = \sum_{i_1=1}^N \sum_{i_2=1}^N \ldots  \sum_{i_m=1}^N P_{i_1 i_2 \ldots i_m} \otimes \ket{i_1}\ket{i_2}\ldots\ket{i_m}\ ,
\end{equation}
where  $P_{i_1 i_2 \ldots i_m}\in \mathrm{Herm}((\mathbb{C}^d)^{\otimes m})$ is a Hermitian operator acting on $(\mathbb{C}^d)^{\otimes m}$. Note that we implicitly used the equivalence $\mathrm{Herm}((\mathbb{C}^d)^{\otimes m})  \simeq  (\mathrm{Herm}(\mathbb{C}^d))^{\otimes m}$]. Recall that $\tilde{p}$ defines an invariant polynomial $p\in\mathbb{P}_m(V)^{\mathrm{U}(d)}$ if and only if $\Pi_U^{\otimes m} \tilde{p} =\tilde{p}$ for all $Y\in\mathrm{U}(d)$. By virtue of the above this is equivalent to demanding that for all $U$ $\mathrm{Ad}_U^{\otimes m} \otimes \mathbb{I}^{\otimes m}_N \tilde{p}=\tilde{p}$. Using \eqref{eq:explicitFORM} and the fact that vectors $\ket{i_1}\ket{i_2}\ldots\ket{i_m}$ are linearly independent we get that this amounts to requiring
\begin{equation}\label{eq:intermediateCommutation}
 [U^{\otimes m}, P_{i_1 i_2 \ldots i_m}]=0\ ,
\end{equation}
for all $U\in\mathrm{U}(d)$ and all sequences  $i_1,i_2,\ldots,i_m$. From the celebrated Schur-Weyl duality \cite{goodman2009symmetry} it follow that \eqref{eq:intermediateCommutation} is equivalent to $P_{i_1 i_2 \ldots i_m}$ being a linear combination of  permutation operators $\tau$ that permute factors of the space $(\mathbb{C}^d)^{\otimes m}$:
\begin{equation}
    \tau \ket{\psi_1}\ket{\psi_2}\ldots \ket{\psi_m} \eqdef \ket{\psi_{\tau^{-1}(1)}}\ket{\psi_{\tau^{-1}(2)}}\ldots\ket{\psi_{\tau^{-1}(m)}}\ .
\end{equation}
We note that the expansion of $P_{i_1 i_2 \ldots i_m}$ onto permutations $\tau$ must be such that the resulting operator is Hermitian (generically it will be a real linear combination of operators of the form $\tau+\tau^\dagger$, $i(\tau -\tau^\dagger)$, which are themselves Hermitian). 

Let us evaluate a (possibly complex valued) polynomial corresponding to $\tilde{p}_\tau=\tau \otimes \ket{i_1}\ket{i_2}\ldots\ket{i_m}$ on a tuple of operators $\textbf{X}=(X_1,\ldots,X_N)$ which we represent by  $\textbf{X}=\sum_{j=1}^N X_j \otimes\ket{j}$. We get
\begin{equation}\label{eq:evaluation}
 p_\tau(\mathbf{X})= \langle \tilde{p}_\tau , \mathbf{X}^{\otimes m} \rangle =  \left\langle \tau \otimes \ket{i_1} \ket{i_2} \ldots \ket{i_m}  , \mathbf{X}^{\otimes m} \right\rangle\ .
\end{equation}
By expanding $\mathbf{X}^{\otimes m}$ into individual tensors we obtain 
\begin{equation}
    \mathbf{X}^{\otimes m} = \sum_{j_1=1}^N  \sum_{j_2=1}^N \ldots \sum_{j_m =1}^N \left(\bigotimes_{k=1}^m X_{j_k}\right) \otimes \ket{j_1} \ket{j_2}\ldots \ket{j_m}\ .
\end{equation}
Inserting this into \eqref{eq:evaluation} and using the inner product structure \eqref{eq:innerprod} (promoted to $V^{\otimes m}$)  and  orthonormality of vectors $\ket{j_1} \ket{j_2}\ldots \ket{j_p}$ gives
\begin{equation}
    p_\tau(\mathbf{X})= \Tr\left(\tau \bigotimes_{k=1}^m X_{i_k}  \right)\ ,
\end{equation}
where $i_1,i_2,\ldots,i_m$ is an \emph{arbitrary} sequence of numbers form $\lbrace 1, \ldots,N\rbrace$. Using the decomposition of  $\tau$  into disjoint cycles $\tau= C_{k_1} C_{k_2} \ldots C_{k_m}$ (indices $k_i$ denote lengths of disjoint cycles composing $\tau$), we get that 
\begin{equation}
    p_\tau(\mathbf{X})=\Tr\left(\prod_{l_1=1}^{k_1} X_{l_1}\right) \Tr\left(\prod_{l_2=1}^{k_2} X_{l_2}\right) \ldots \Tr\left(\prod_{l_p=1}^{k_p} X_{l_p}\right)\ ,
\end{equation}
for suitably chosen sequences $\lbrace l_1 \rbrace_{l_1}^{k_1}$,...,  $\lbrace l_p \rbrace_{l_p}^{k_p}$, that depend on $\tau$ and the sequence  $i_1,i_2,\ldots,i_m$. This shows that $p_\tau(\mathbf{X})$ can be expressed as a product of Bargmann invariants of degree at most $m$. 

Since an \emph{arbitrary} real polynomial invariant $p\in\mathbb{P}(V)^{\mathrm{U}(d)}$ can be expressed via a suitable linear combination of  $p_\tau$, we finally conclude that Bargmann invariants decide simultaneous unitary equivalence of two tuples of Hermitian operators $\mathbf{X}=(X_1,\ldots,X_N),\mathbf{Y}=(Y_1,\ldots,Y_N)$ operating in $\mathbb{C}^d$. Restricting to density matrices we see that the above considerations proved the desired result for equivalence of tuples of mixed states.
\begin{thm}
Let $\mathcal{S}=(\rho_1,\ldots,\rho_N),\mathcal{S}'=(\sigma_1,\ldots,\sigma_N)$  be two tuples of mixed states on $\mathbb{C}^d$. Then, there exist $U\in\mathrm{U}(d)$ such that for every $i=1,\ldots,N$ 
\begin{equation}
\sigma_i = U\rho_i U^\dagger\ 
\end{equation}
if and only if for every  $m$ and for every sequence $i_1,i_2,\ldots, i_m$ of numbers from $\lbrace 1,\ldots, N\rbrace$ the corresponding Bargman invariants of degree $m$ agree 
\begin{equation}
    \Tr\left(\rho_{i_1} \rho_{i_2} \ldots \rho_{i_m} \right) = \Tr\left(\sigma_{i_1} \sigma_{i_2} \ldots \sigma_{i_m} \right).
\end{equation}
\end{thm}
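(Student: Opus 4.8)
The plan is to realize this as an instance of the general principle that, for a compact group acting orthogonally on a finite-dimensional real vector space, the ring of invariant polynomials separates orbits, and then to show that every such invariant polynomial is a polynomial in Bargmann invariants. The ``only if'' direction is immediate: $\Tr(\rho_{i_1}\cdots\rho_{i_m})$ is manifestly invariant under $\rho_i\mapsto U\rho_i U^\dagger$, since the factors of $U$ and $U^\dagger$ telescope inside the cyclic trace. So all the work is in the converse.

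First I would fix the group action: let $V=\mathrm{Herm}(\mathbb{C}^d)^{\oplus N}\simeq\mathrm{Herm}(\mathbb{C}^d)\otimes\mathbb{R}^N$ with invariant inner product $\langle\mathbf{X},\mathbf{Y}\rangle=\sum_i\Tr(X_i Y_i)$, and let $\mathrm{U}(d)$ act by $\Pi_U=\mathrm{Ad}_U\otimes\mathbb{I}_N$, so that $\Pi$-orbits are exactly unitary-equivalence classes of $N$-tuples of mixed states. By the standard orbit-separation theorem for compact group actions \cite{vranaPHD}, it then suffices to show that the $\Pi$-invariant polynomials on $V$ are spanned by products of Bargmann invariants; agreement of all Bargmann invariants will then force agreement on every invariant polynomial, hence $\Pi$-equivalence of $\mathcal{S}$ and $\mathcal{S}'$.

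Next I would analyze the invariant polynomials degree by degree. A homogeneous invariant of degree $m$ is encoded by a tensor $\tilde p\in V^{\otimes m}$ with $\Pi_U^{\otimes m}\tilde p=\tilde p$. Reordering tensor factors gives $V^{\otimes m}\simeq\mathrm{Herm}((\mathbb{C}^d)^{\otimes m})\otimes(\mathbb{R}^N)^{\otimes m}$, with the action becoming conjugation by $U^{\otimes m}$ on the first factor; writing $\tilde p=\sum_{i_1,\dots,i_m}P_{i_1\cdots i_m}\otimes\ket{i_1}\cdots\ket{i_m}$ with each $P_{i_1\cdots i_m}$ Hermitian, invariance is equivalent to $[U^{\otimes m},P_{i_1\cdots i_m}]=0$ for all $U\in\mathrm{U}(d)$. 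Schur-Weyl duality then identifies each $P_{i_1\cdots i_m}$ with a (Hermitian) linear combination of the permutation operators $\tau$ on $(\mathbb{C}^d)^{\otimes m}$. Finally I would evaluate the elementary invariant $p_\tau$ attached to $\tau\otimes\ket{i_1}\cdots\ket{i_m}$ on $\mathbf{X}=\sum_j X_j\otimes\ket{j}$: expanding $\mathbf{X}^{\otimes m}$ and using orthonormality of the $\mathbb{R}^N$ basis vectors yields $p_\tau(\mathbf{X})=\Tr\big(\tau\,X_{i_1}\otimes\cdots\otimes X_{i_m}\big)$, and decomposing $\tau$ into disjoint cycles turns this into a product of traces of products of the $X$'s, i.e.\ a product of Bargmann invariants of degree at most $m$. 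Since a general real invariant polynomial is a real linear combination of such $p_\tau$ (taking real and imaginary parts when necessary, each of which is again of this form, as the complex conjugate of $p_\tau$ merely reverses each cyclic trace and so retains its Bargmann-invariant shape), this establishes that Bargmann invariants generate all invariants; specializing the Hermitian tuples to density matrices gives the theorem.

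The main obstacle is the middle step --- correctly matching the abstract invariant-theory picture to this concrete tensor structure so that Schur-Weyl duality applies, in particular identifying the commutant of $\{U^{\otimes m}:U\in\mathrm{U}(d)\}$ with the span of $S_m$ and tracking the Hermiticity constraint on the $P_{i_1\cdots i_m}$. Once the commutant is pinned down, reducing $\Tr(\tau\,X_{i_1}\otimes\cdots\otimes X_{i_m})$ to a product of Bargmann invariants through the disjoint-cycle decomposition of $\tau$ is routine, but one must check that no invariant of degree higher than $m$ is ever required. (A sharper bound --- degree $m\le d^2$, and $Nd^2+1$ independent invariants --- can be obtained from the fact that the commutant of $\mathrm{U}(d)^{\otimes m}$ stabilizes once $m\ge d^2$, or from Procesi's trace-identity theory, but this refinement is not needed for the statement as phrased.)
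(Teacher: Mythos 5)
Your proposal is correct and follows essentially the same route as the paper's Appendix on complete PU invariants for mixed states: orbit separation by invariant polynomials for the compact action $\mathrm{Ad}_U\otimes\mathbb{I}_N$ on $\mathrm{Herm}(\mathbb{C}^d)\otimes\mathbb{R}^N$, identification of the invariant tensors via Schur--Weyl duality with linear combinations of permutation operators, and reduction of $\Tr\left(\tau\, X_{i_1}\otimes\cdots\otimes X_{i_m}\right)$ to products of Bargmann invariants through the disjoint-cycle decomposition. Your remark on handling Hermiticity via real/imaginary parts matches the paper's use of combinations of the form $\tau+\tau^\dagger$ and $i(\tau-\tau^\dagger)$, so there is nothing substantive to add.
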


We conclude this part by remarking about the degree and number of generating polynomials that suffice to decide unitary equivalence of tuples of density matrices in $\mathbb{C}^d$. Bargmann invariants $\Tr(X_{i_1}X_{i_2}\ldots X_{i_m})$ are relevant also for the problem of simultaneous matrix conjugation \cite{PROCESI1976306,formanekpolynomial,wigderson2019mathematics}. This is a related equivalence problem in which the unitary group $\U(d)$ is replaced by the special linear group $\mathrm{SL}(d)$ (a group of $d\times d$ matrices with unit determinant) with a representation on $N$-tuples of matrices defined by 
\begin{equation}
    \Pi_Z\left[(X_1,\ldots, X_N)  \right] = \left( ZX_1 Z^{-1},\ldots, Z X_N Z^{-1} \right)\ ,
\end{equation}
where $X_i \in \mathrm{End}(\mathbb{C}^d)$ and $Z\in \mathrm{SL}(d)$. It turns out that \emph{the same} Bargmann invariants decide the problem of equivalence of (closures of) orbits of group action in this scenario. Because of this the ring of invariant polynomials generated by  $\Tr(X_{i_1}X_{i_2}\ldots X_{i_m})$ has been extensively studied in the context of invariant theory - see Section 13.9.2 in \cite{wigderson2019mathematics} for an accessible exposition. It follows from the discussion contained there that the degrees of Bargmann invariants generating the whole invariant ring   $\mathbb{P}(V)^{\U(d)}$  are at most $m\leq d^2$.   Furthermore, the number of independent invariants can be chosen to be $Nd^2+1$ and can be found efficiently in time polynomial in both  $d$ and $N$.

\section{Conditions for linear independence}
\label{app:li}

As described in the main text, we can parameterize the Gram matrix $G$ so that a complete set of Bargmann invariants are represented in the matrix elements. The condition for linear independence is simply:
\begin{equation}
    \det(G)>0.
\end{equation}
Having measured all the necessary invariants, the determinant of the resulting numerical Gram matrix $G$ can be computed efficiently.

To illustrate the type of dependence on the Bargmann invariants that results, below we will work out the expression for $\det(G)$ for $m=3,4$ normalized pure states (which we assume to be non-orthogonal). We will use the construction of the Gram matrix presented in Theorem 1 of the main text, choosing for the spanning tree the star graph with state $\ket{\psi_1}$ as the central node. For 3 pure states $\ket{\psi_1}, \ket{\psi_2}, \ket{\psi_3}$, we write the Gram matrix:
\begin{align}
G =& 
\begin{pmatrix}
1 & \left|\left\langle \psi_1 |\psi_2 \right \rangle\right| & \left|\left\langle \psi_1 |\psi_3 \right \rangle\right| \notag \\
\left|\left\langle \psi_1 |\psi_2 \right \rangle\right| & 1 & \left\langle \psi_2 |\psi_3 \right \rangle \\
\left|\left\langle \psi_1 |\psi_3 \right \rangle\right| & \left\langle \psi_2 |\psi_3 \right \rangle^*& 1
\end{pmatrix} \\
=&
\begin{pmatrix}
1 & \sqrt{\Delta_{12}} & \sqrt{\Delta_{13}} \\
\sqrt{\Delta_{12}} & 1 & \sqrt{\Delta_{23}}e^{i\phi_{23}} \\
\sqrt{\Delta_{13}} & \sqrt{\Delta_{23}}e^{-i\phi_{23}}& 1
\end{pmatrix},
\end{align}
where we have parameterized $G$ in terms of overlaps, i.e. 2-vertex invariants $\Delta_{ij}=\left|\left\langle \psi_i |\psi_j \right \rangle \right|^2=\frac{|\Delta_{ijj}|^2}{\Delta_{1ii} \Delta_{1jj}}$ and the single 3-state invariant phase $e^{i\phi_{23}}=\Delta_{123}/|\Delta_{123}|$. 
For the case we consider here (no null overlaps), the linear independence condition is then:
\begin{align*}
    \det(G) &>0 \Leftrightarrow \\ 1-(\Delta_{12}+\Delta_{13}+\Delta_{23})&+2\sqrt{\Delta_{12}\Delta_{13}\Delta_{23}}\cos(\phi_{23})>0.
\end{align*}
This condition has appeared previously in the literature, in the context of unambiguous state discrimination \cite{Bergou12}. Note that measurement of overlaps only may suffice to guarantee linear independence, as $1-(\Delta_{12}+\Delta_{13}+\Delta_{23})-2\sqrt{\Delta_{12}\Delta_{13}\Delta_{23}}>0$ implies $\det(G)>0$ independently of the value of $\phi_{23}$. This was the approach to dimension witnesses based on overlaps proposed in \cite{GalvaoB20}, and measured experimentally in \cite{Giordani21}.

The case of 4 states with no null overlaps can be analyzed similarly, using overlaps and phases of degree-3 invariants. The Gram matrix in this case can be written thus:
\begin{equation*}
G = 
\begin{pmatrix}
1 & \sqrt{\Delta_{12}} & \sqrt{\Delta_{13}} & \sqrt{\Delta_{14}}\\
\sqrt{\Delta_{12}} & 1& \sqrt{\Delta_{23}}e^{i\phi_{23}} &\sqrt{\Delta_{24}}e^{i\phi_{24}} \\
\sqrt{\Delta_{13}} &\sqrt{\Delta_{23}}e^{-i\phi_{23}}&1& \sqrt{\Delta_{34}}e^{i\phi_{34}}\\
\sqrt{\Delta_{14}}&\sqrt{\Delta_{24}}e^{-i\phi_{24}}&\sqrt{\Delta_{34}}e^{-i\phi_{34}}&1
\end{pmatrix},
\end{equation*}
where again we have the overlaps $\Delta_{ij}=\left|\left\langle \psi_i |\psi_j \right \rangle \right|^2$ and the 3-state Bargmann invariant phases $e^{i\phi_{ij}}=\Delta_{1ij}/|\Delta_{1ij}|$. 

The linear independence condition can then be stated:
\begin{widetext}
\begin{align*}
    \det(G)>0 \Leftrightarrow 1-(\Delta_{12}+\Delta_{13}+\Delta_{14}+\Delta_{23}+\Delta_{24}+\Delta_{34})+\left(\Delta_{12}\Delta_{34}+\Delta_{13}\Delta_{24}+\Delta_{14}\Delta_{23}\right)+\\
    +2\left[\sqrt{\Delta_{12}\Delta_{13}\Delta_{23}}\cos(\phi_{23})+\sqrt{\Delta_{12}\Delta_{14}\Delta_{24}}\cos(\phi_{24})+\sqrt{\Delta_{13}\Delta_{14}\Delta_{34}}\cos(\phi_{34})+\sqrt{\Delta_{23}\Delta_{24}\Delta_{34}}\cos(\phi_{34}-\phi_{24})   \right]\\
    -2\left[ \sqrt{\Delta_{12}\Delta_{13}\Delta_{14}\Delta_{34}}\cos(\phi_{34})+\sqrt{\Delta_{12}\Delta_{14}\Delta_{23}\Delta_{34}}\cos(\phi_{34}-\phi_{23})+\sqrt{\Delta_{13}\Delta_{14}\Delta_{23}\Delta_{24}}\cos(\phi_{24}-\phi_{23})   \right]>0.
\end{align*}
\end{widetext}
It is clear that any set of $m$ states without null overlaps can be analyzed similarly, with the linear independence condition expressed in terms of invariants of degree 3. As discussed in the main text, more complex orthogonality conditions results in frame graphs that require higher-order invariants for a complete characterization of the PU-equivalence class.




\section{Maximizing $\Im(\Delta)$} \label{app:imaginarity}

In this Appendix we describe the maximum values we found for $\Im(\Delta)$ for $n$-state Bargmann invariants, and show that for a suitable choice of a tuple of $n$ states $\Im(\Delta) \to 1$ as $n \to \infty$. Take a tuple of $n$ pure, single-qubit states forming successive vertices of a regular spherical $n$-gon on the surface of the Bloch sphere. They can be chosen as 
\begin{equation}
\ket{\psi_k}=\cos(\theta/2)\ket{0}+\sin(\theta/2)e^{ 2\pi ik/n}\ket{1},  \label{eq:ngon}
\end{equation}
for  $k \in \{0,1,2,\dots,(n-1)\}$, for some choice of $\theta$. For $n=3$, it is easy to show that $\Im(\Delta)$ reaches its maximal value of $\Im(\Delta)=1/4$ for the spherical triangle vertices given by $\theta=\arccos{(1/\sqrt{3})}$. More generally, for the regular spherical $n$-gon vertices we have:
\begin{equation}
\Delta=[1+\sin^2(\theta/2) (e^{2\pi i/n}-1) ]^n \label{eq:deltan}
\end{equation}

For $n \ge 3$, the tuples of states described by Eq. (\ref{eq:ngon}) have the highest value of $\Im(\Delta)$ that we have found. For example, for $n=10$ we obtain $\Im(\Delta)\approx 0.6949$ for $\theta\approx 0.9972$. Using Eq. (\ref{eq:deltan}) it is easy to show that $\Im(\Delta) \to 1$ as $n \to \infty$, if we choose $\theta = \pi/3$.
\section{Basis-independent coherence witnesses} \label{app:cohe}

Here we look at the simplest scenarios for which coherence witnesses can be given in terms of Bargmann invariants. For $m=2$ states $\rho_1, \rho_2$, it is possible to guarantee coherence if we measure the overlap $\Delta_{12}$ together with each state's purity $\Delta_{ii}=Tr(\rho_i^2)$ - for example, $\Delta_{11}=\Delta_{22}=1$ means both states are pure, so that any $\Delta_{12}\neq 0,1$ will serve as a witness of coherence. The situation is more interesting when we only measure relational information between different states.

 Let us now analyze the scenario with $m=3$ states, where we measure 2- and 3-state Bargmann invariants $\Delta_{12}, \Delta_{13}, \Delta_{23}$, and $\Delta_{123}$. As discussed before, for simultaneously diagonal (i.e., coherence-free) states $\Delta_{123}$ is the probability that independent measurements of the reference observable on the 3 states give the same outcome (and similarly for $\Delta_{ij}$ and state pair $\rho_i, \rho_j$). So the first type of witness would result if the 3-state Bargmann invariant was complex, or real negative. Following \cite{GalvaoB20}, we find the polytope of allowed  4-tuples $(\Delta_{12}, \Delta_{13}, \Delta_{23}, \Delta_{123})$ by listing the allowed deterministic states in this space, and calculating their convex hull. For this scenario, the allowed deterministic states are given in Table I. 
\begin{table}[ht]
  \begin{center}
    \begin{tabular}{|c|c|c|c|}
    \hline
      \textbf{$\Delta_{12}$} & \textbf{$\Delta_{13}$} & \textbf{$\Delta_{23}$} &
      \textbf{$\Delta_{123}$}\\
      \hline \hline
     0 & 0 & 0 & 0\\ \hline
     0 & 0 & 1 & 0\\ \hline
     0 & 1 & 0 & 0\\ \hline
     1 & 0 & 0 & 0\\ \hline
     1 & 1 & 1 & 1\\ \hline
    \end{tabular} \label{poly3}
\end{center}
\caption{Each row is one of the possible deterministic assignments to the Bargmann invariants describing the 3-state scenario. The convex hull of the rows characterizes all allowed 4-tuples of invariants for simultaneously diagonal, i.e. coherence-free states.}
\label{table:and2}
\end{table}
The corresponding polytope facets, characterizing all 4-tuples of invariants consistent with coherence-free states, are given by inequalities:
\begin{align}
\Delta_{123}&\ge 0, \label{eq:posd1}\\
  \Delta_{123}&\le \Delta_{ij}, \text{for } ij = 12, 13, 23, \label{eq:posd2}\\
\Delta_{123} &\ge \frac{1}{2}\left( \Delta_{12}+\Delta_{13}+\Delta_{23}-1\right). \label{eq:posd3}
\end{align}
Any violation of ineqs.\ (\ref{eq:posd1}-\ref{eq:posd3}) is a witness of coherence, and these inequalities are strictly stronger to detect coherence than the imaginarity witnesses we have introduced, or previous basis-independent coherence witnesses based on overlaps \cite{GalvaoB20}, measured recently in  \cite{Giordani21}. We can see this with a simple example. Consider 3 single-qubit states equally spaced  on the $xz$ plane great circle of the Bloch sphere. These states have real amplitudes, so their coherence cannot be witnessed via imaginarity witnesses; it can be easily checked also that these equal-valued overlaps $r_{12}=r_{13}=r_{23}=1/4$ are compatible with coherence-free states \cite{GalvaoB20}. This set, however, results in $\Delta_{123}=-1/8$, hence witnesssing coherence due to violation of ineq.\ (\ref{eq:posd1}).


\bibliography{bosonsamplingrefs}

\begin{thebibliography}{33}%
\makeatletter
\providecommand \@ifxundefined [1]{%
 \@ifx{#1\undefined}
}%
\providecommand \@ifnum [1]{%
 \ifnum #1\expandafter \@firstoftwo
 \else \expandafter \@secondoftwo
 \fi
}%
\providecommand \@ifx [1]{%
 \ifx #1\expandafter \@firstoftwo
 \else \expandafter \@secondoftwo
 \fi
}%
\providecommand \natexlab [1]{#1}%
\providecommand \enquote  [1]{``#1''}%
\providecommand \bibnamefont  [1]{#1}%
\providecommand \bibfnamefont [1]{#1}%
\providecommand \citenamefont [1]{#1}%
\providecommand \href@noop [0]{\@secondoftwo}%
\providecommand \href [0]{\begingroup \@sanitize@url \@href}%
\providecommand \@href[1]{\@@startlink{#1}\@@href}%
\providecommand \@@href[1]{\endgroup#1\@@endlink}%
\providecommand \@sanitize@url [0]{\catcode `\\12\catcode `\$12\catcode
  `\&12\catcode `\#12\catcode `\^12\catcode `\_12\catcode `\%12\relax}%
\providecommand \@@startlink[1]{}%
\providecommand \@@endlink[0]{}%
\providecommand \url  [0]{\begingroup\@sanitize@url \@url }%
\providecommand \@url [1]{\endgroup\@href {#1}{\urlprefix }}%
\providecommand \urlprefix  [0]{URL }%
\providecommand \Eprint [0]{\href }%
\providecommand \doibase [0]{http://dx.doi.org/}%
\providecommand \selectlanguage [0]{\@gobble}%
\providecommand \bibinfo  [0]{\@secondoftwo}%
\providecommand \bibfield  [0]{\@secondoftwo}%
\providecommand \translation [1]{[#1]}%
\providecommand \BibitemOpen [0]{}%
\providecommand \bibitemStop [0]{}%
\providecommand \bibitemNoStop [0]{.\EOS\space}%
\providecommand \EOS [0]{\spacefactor3000\relax}%
\providecommand \BibitemShut  [1]{\csname bibitem#1\endcsname}%
\let\auto@bib@innerbib\@empty
\bibitem [{\citenamefont {Klein}(1900)}]{Erlangen1900}%
  \BibitemOpen
  \bibfield  {author} {\bibinfo {author} {\bibfnamefont {Felix}\ \bibnamefont
  {Klein}},\ }\bibfield  {title} {\enquote {\bibinfo {title} {{A comparative
  review of recent researches in geometry}},}\ }\href {\doibase
  bams/1183407629} {\bibfield  {journal} {\bibinfo  {journal} {Bulletin of the
  New York Mathematical Society}\ }\textbf {\bibinfo {volume} {2}},\ \bibinfo
  {pages} {215 -- 249} (\bibinfo {year} {1900})}\BibitemShut {NoStop}%
\bibitem [{\citenamefont {Leifer}\ \emph {et~al.}(2004)\citenamefont {Leifer},
  \citenamefont {Linden},\ and\ \citenamefont {Winter}}]{LeiferLW04}%
  \BibitemOpen
  \bibfield  {author} {\bibinfo {author} {\bibfnamefont {M.~S.}\ \bibnamefont
  {Leifer}}, \bibinfo {author} {\bibfnamefont {N.}~\bibnamefont {Linden}}, \
  and\ \bibinfo {author} {\bibfnamefont {A.}~\bibnamefont {Winter}},\
  }\bibfield  {title} {\enquote {\bibinfo {title} {Measuring polynomial
  invariants of multiparty quantum states},}\ }\href {\doibase
  10.1103/PhysRevA.69.052304} {\bibfield  {journal} {\bibinfo  {journal} {Phys.
  Rev. A}\ }\textbf {\bibinfo {volume} {69}},\ \bibinfo {pages} {052304}
  (\bibinfo {year} {2004})}\BibitemShut {NoStop}%
\bibitem [{\citenamefont {Bargmann}(1964)}]{Bargmann64}%
  \BibitemOpen
  \bibfield  {author} {\bibinfo {author} {\bibfnamefont {V.}~\bibnamefont
  {Bargmann}},\ }\bibfield  {title} {\enquote {\bibinfo {title} {Note on
  wigner's theorem on symmetry operations},}\ }\href {\doibase
  10.1063/1.1704188} {\bibfield  {journal} {\bibinfo  {journal} {Journal of
  Mathematical Physics}\ }\textbf {\bibinfo {volume} {5}},\ \bibinfo {pages}
  {862} (\bibinfo {year} {1964})}\BibitemShut {NoStop}%
\bibitem [{\citenamefont {Chien}\ and\ \citenamefont
  {Waldron}(2016)}]{ChienW16}%
  \BibitemOpen
  \bibfield  {author} {\bibinfo {author} {\bibfnamefont {T.-Y.}\ \bibnamefont
  {Chien}}\ and\ \bibinfo {author} {\bibfnamefont {S.}~\bibnamefont
  {Waldron}},\ }\bibfield  {title} {\enquote {\bibinfo {title} {A
  characterization of projective unitary equivalence of finite frames and
  applications},}\ }\href {\doibase 10.1137/15M1042140} {\bibfield  {journal}
  {\bibinfo  {journal} {SIAM J. Discrete Math., 30(2), 976–994. (19 pages)}\
  }\textbf {\bibinfo {volume} {30}},\ \bibinfo {pages} {976--994} (\bibinfo
  {year} {2016})}\BibitemShut {NoStop}%
\bibitem [{\citenamefont {Menssen}\ \emph {et~al.}(2017)\citenamefont
  {Menssen}, \citenamefont {Jones}, \citenamefont {Metcalf}, \citenamefont
  {Tichy}, \citenamefont {Barz}, \citenamefont {Kolthammer},\ and\
  \citenamefont {Walmsley}}]{Menssen2017}%
  \BibitemOpen
  \bibfield  {author} {\bibinfo {author} {\bibfnamefont {A.~J.}\ \bibnamefont
  {Menssen}}, \bibinfo {author} {\bibfnamefont {A.~E.}\ \bibnamefont {Jones}},
  \bibinfo {author} {\bibfnamefont {B.~J.}\ \bibnamefont {Metcalf}}, \bibinfo
  {author} {\bibfnamefont {M.~C.}\ \bibnamefont {Tichy}}, \bibinfo {author}
  {\bibfnamefont {S.}~\bibnamefont {Barz}}, \bibinfo {author} {\bibfnamefont
  {W.~S.}\ \bibnamefont {Kolthammer}}, \ and\ \bibinfo {author} {\bibfnamefont
  {I.~A.}\ \bibnamefont {Walmsley}},\ }\bibfield  {title} {\enquote {\bibinfo
  {title} {Distinguishability and {{Many}}-{{Particle Interference}}},}\ }\href
  {\doibase 10.1103/PhysRevLett.118.153603} {\bibfield  {journal} {\bibinfo
  {journal} {Phys. Rev. Lett.}\ }\textbf {\bibinfo {volume} {118}},\ \bibinfo
  {pages} {153603} (\bibinfo {year} {2017})}\BibitemShut {NoStop}%
\bibitem [{\citenamefont {Jones}\ \emph {et~al.}(2020)\citenamefont {Jones},
  \citenamefont {Menssen}, \citenamefont {Chrzanowski}, \citenamefont
  {Wolterink}, \citenamefont {Shchesnovich},\ and\ \citenamefont
  {Walmsley}}]{Jones2020}%
  \BibitemOpen
  \bibfield  {author} {\bibinfo {author} {\bibfnamefont {Alex~E.}\ \bibnamefont
  {Jones}}, \bibinfo {author} {\bibfnamefont {Adrian~J.}\ \bibnamefont
  {Menssen}}, \bibinfo {author} {\bibfnamefont {Helen~M.}\ \bibnamefont
  {Chrzanowski}}, \bibinfo {author} {\bibfnamefont {Tom A.~W.}\ \bibnamefont
  {Wolterink}}, \bibinfo {author} {\bibfnamefont {Valery~S.}\ \bibnamefont
  {Shchesnovich}}, \ and\ \bibinfo {author} {\bibfnamefont {Ian~A.}\
  \bibnamefont {Walmsley}},\ }\bibfield  {title} {\enquote {\bibinfo {title}
  {Multiparticle interference of pairwise distinguishable photons},}\ }\href
  {\doibase 10.1103/PhysRevLett.125.123603} {\bibfield  {journal} {\bibinfo
  {journal} {Phys. Rev. Lett.}\ }\textbf {\bibinfo {volume} {125}},\ \bibinfo
  {pages} {123603} (\bibinfo {year} {2020})}\BibitemShut {NoStop}%
\bibitem [{\citenamefont {Minke}\ \emph {et~al.}(2021)\citenamefont {Minke},
  \citenamefont {Buchleitner},\ and\ \citenamefont {Dittel}}]{Minke21}%
  \BibitemOpen
  \bibfield  {author} {\bibinfo {author} {\bibfnamefont {Alexander~M.}\
  \bibnamefont {Minke}}, \bibinfo {author} {\bibfnamefont {Andreas}\
  \bibnamefont {Buchleitner}}, \ and\ \bibinfo {author} {\bibfnamefont
  {Christoph}\ \bibnamefont {Dittel}},\ }\href@noop {} {\enquote {\bibinfo
  {title} {Characterizing four-body indistinguishability via symmetries},}\ }
  (\bibinfo {year} {2021}),\ \Eprint {http://arxiv.org/abs/2103.04600}
  {arXiv:2103.04600 [quant-ph]} \BibitemShut {NoStop}%
\bibitem [{\citenamefont {Simon}\ and\ \citenamefont
  {Mukunda}(1993)}]{SimonM93}%
  \BibitemOpen
  \bibfield  {author} {\bibinfo {author} {\bibfnamefont {R.}~\bibnamefont
  {Simon}}\ and\ \bibinfo {author} {\bibfnamefont {N.}~\bibnamefont
  {Mukunda}},\ }\bibfield  {title} {\enquote {\bibinfo {title} {Bargmann
  invariant and the geometry of the g\"uoy effect},}\ }\href {\doibase
  10.1103/PhysRevLett.70.880} {\bibfield  {journal} {\bibinfo  {journal} {Phys.
  Rev. Lett.}\ }\textbf {\bibinfo {volume} {70}},\ \bibinfo {pages} {880--883}
  (\bibinfo {year} {1993})}\BibitemShut {NoStop}%
\bibitem [{\citenamefont {Dariusz~Chruscinski}(2004)}]{Chruscinski04}%
  \BibitemOpen
  \bibfield  {author} {\bibinfo {author} {\bibfnamefont {Andrzej~Jamiolkowski}\
  \bibnamefont {Dariusz~Chruscinski}},\ }\href
  {http://gen.lib.rus.ec/book/index.php?md5=00285203b16df9e982f0cbd96a5c84ce}
  {\emph {\bibinfo {title} {Geometric phases in classical and quantum
  mechanics}}},\ \bibinfo {edition} {1st}\ ed.,\ Progress in Mathematical
  Physics\ (\bibinfo  {publisher} {Birkhäuser Boston},\ \bibinfo {year}
  {2004})\BibitemShut {NoStop}%
\bibitem [{\citenamefont {Pancharatnam}(1956)}]{Pancharatnam56}%
  \BibitemOpen
  \bibfield  {author} {\bibinfo {author} {\bibfnamefont {S.}~\bibnamefont
  {Pancharatnam}},\ }\bibfield  {title} {\enquote {\bibinfo {title}
  {Generalized theory of interference, and its applications},}\ }\href
  {\doibase 10.1007/BF03046050} {\bibfield  {journal} {\bibinfo  {journal}
  {Proceedings of the Indian Academy of Sciences - Section A}\ }\textbf
  {\bibinfo {volume} {44}},\ \bibinfo {pages} {247--262} (\bibinfo {year}
  {1956})}\BibitemShut {NoStop}%
\bibitem [{\citenamefont {Samuel}\ and\ \citenamefont
  {Bhandari}(1988)}]{Samuel88}%
  \BibitemOpen
  \bibfield  {author} {\bibinfo {author} {\bibfnamefont {Joseph}\ \bibnamefont
  {Samuel}}\ and\ \bibinfo {author} {\bibfnamefont {Rajendra}\ \bibnamefont
  {Bhandari}},\ }\bibfield  {title} {\enquote {\bibinfo {title} {General
  setting for berry's phase},}\ }\href {\doibase 10.1103/PhysRevLett.60.2339}
  {\bibfield  {journal} {\bibinfo  {journal} {Phys. Rev. Lett.}\ }\textbf
  {\bibinfo {volume} {60}},\ \bibinfo {pages} {2339--2342} (\bibinfo {year}
  {1988})}\BibitemShut {NoStop}%
\bibitem [{\citenamefont {Ekert}\ \emph {et~al.}(2002)\citenamefont {Ekert},
  \citenamefont {Alves}, \citenamefont {Oi}, \citenamefont {Horodecki},
  \citenamefont {Horodecki},\ and\ \citenamefont {Kwek}}]{Ekert02}%
  \BibitemOpen
  \bibfield  {author} {\bibinfo {author} {\bibfnamefont {Artur~K.}\
  \bibnamefont {Ekert}}, \bibinfo {author} {\bibfnamefont {Carolina~Moura}\
  \bibnamefont {Alves}}, \bibinfo {author} {\bibfnamefont {Daniel K.~L.}\
  \bibnamefont {Oi}}, \bibinfo {author} {\bibfnamefont {Micha\l{}}\
  \bibnamefont {Horodecki}}, \bibinfo {author} {\bibfnamefont {Pawe\l{}}\
  \bibnamefont {Horodecki}}, \ and\ \bibinfo {author} {\bibfnamefont {L.~C.}\
  \bibnamefont {Kwek}},\ }\bibfield  {title} {\enquote {\bibinfo {title}
  {Direct estimations of linear and nonlinear functionals of a quantum
  state},}\ }\href {\doibase 10.1103/PhysRevLett.88.217901} {\bibfield
  {journal} {\bibinfo  {journal} {Phys. Rev. Lett.}\ }\textbf {\bibinfo
  {volume} {88}},\ \bibinfo {pages} {217901} (\bibinfo {year}
  {2002})}\BibitemShut {NoStop}%
\bibitem [{\citenamefont {Johnston}(2021)}]{johnston2021advanced}%
  \BibitemOpen
  \bibfield  {author} {\bibinfo {author} {\bibfnamefont {N.}~\bibnamefont
  {Johnston}},\ }\href {https://books.google.pl/books?id=w24vEAAAQBAJ} {\emph
  {\bibinfo {title} {Advanced Linear and Matrix Algebra}}}\ (\bibinfo
  {publisher} {Springer International Publishing},\ \bibinfo {year}
  {2021})\BibitemShut {NoStop}%
\bibitem [{Note1()}]{Note1}%
  \BibitemOpen
  \bibinfo {note} {If $G^\Psi $ is not connected then the problem of PU
  equivalence of states from $\Psi $ reduces to independent problems concerning
  states associated with connected components in $\Gamma ^\Psi $.}\BibitemShut
  {Stop}%
\bibitem [{\citenamefont {Gross}\ and\ \citenamefont
  {Yellen}(1998)}]{gross1998graph}%
  \BibitemOpen
  \bibfield  {author} {\bibinfo {author} {\bibfnamefont {J.L.}\ \bibnamefont
  {Gross}}\ and\ \bibinfo {author} {\bibfnamefont {J.}~\bibnamefont {Yellen}},\
  }\href {https://books.google.pl/books?id=CRDMgj-DfdEC} {\emph {\bibinfo
  {title} {Graph Theory and Its Applications, Second Edition}}},\ Discrete
  Mathematics and Its Applications\ (\bibinfo  {publisher} {Taylor \&
  Francis},\ \bibinfo {year} {1998})\BibitemShut {NoStop}%
\bibitem [{\citenamefont {Procesi}(1976)}]{PROCESI1976306}%
  \BibitemOpen
  \bibfield  {author} {\bibinfo {author} {\bibfnamefont {C}~\bibnamefont
  {Procesi}},\ }\bibfield  {title} {\enquote {\bibinfo {title} {The invariant
  theory of n Ã— n matrices},}\ }\href {\doibase
  https://doi.org/10.1016/0001-8708(76)90027-X} {\bibfield  {journal} {\bibinfo
   {journal} {Advances in Mathematics}\ }\textbf {\bibinfo {volume} {19}},\
  \bibinfo {pages} {306--381} (\bibinfo {year} {1976})}\BibitemShut {NoStop}%
\bibitem [{\citenamefont {Formanek}()}]{formanekpolynomial}%
  \BibitemOpen
  \bibfield  {author} {\bibinfo {author} {\bibfnamefont {E.}~\bibnamefont
  {Formanek}},\ }\href {https://books.google.pl/books?id=pmoCcSY\_BTgC} {\emph
  {\bibinfo {title} {The Polynomial Identities and Invariants of N X N
  Matrices}}},\ CBMS Regional Conference Series\ (\bibinfo  {publisher}
  {Conference Board of the Mathematical Sciences})\BibitemShut {NoStop}%
\bibitem [{\citenamefont {Wigderson}(2019)}]{wigderson2019mathematics}%
  \BibitemOpen
  \bibfield  {author} {\bibinfo {author} {\bibfnamefont {A.}~\bibnamefont
  {Wigderson}},\ }\href {https://books.google.pl/books?id=-WCqDwAAQBAJ} {\emph
  {\bibinfo {title} {Mathematics and Computation: A Theory Revolutionizing
  Technology and Science}}}\ (\bibinfo  {publisher} {Princeton University
  Press},\ \bibinfo {year} {2019})\BibitemShut {NoStop}%
\bibitem [{\citenamefont {Renou}\ \emph {et~al.}(2021)\citenamefont {Renou},
  \citenamefont {Trillo}, \citenamefont {Weilenmann}, \citenamefont {Thinh},
  \citenamefont {Tavakoli}, \citenamefont {Gisin}, \citenamefont {Acin},\ and\
  \citenamefont {Navascues}}]{Renou21}%
  \BibitemOpen
  \bibfield  {author} {\bibinfo {author} {\bibfnamefont {Marc-Olivier}\
  \bibnamefont {Renou}}, \bibinfo {author} {\bibfnamefont {David}\ \bibnamefont
  {Trillo}}, \bibinfo {author} {\bibfnamefont {Mirjam}\ \bibnamefont
  {Weilenmann}}, \bibinfo {author} {\bibfnamefont {Le~Phuc}\ \bibnamefont
  {Thinh}}, \bibinfo {author} {\bibfnamefont {Armin}\ \bibnamefont {Tavakoli}},
  \bibinfo {author} {\bibfnamefont {Nicolas}\ \bibnamefont {Gisin}}, \bibinfo
  {author} {\bibfnamefont {Antonio}\ \bibnamefont {Acin}}, \ and\ \bibinfo
  {author} {\bibfnamefont {Miguel}\ \bibnamefont {Navascues}},\ }\href@noop {}
  {\enquote {\bibinfo {title} {Quantum physics needs complex numbers},}\ }
  (\bibinfo {year} {2021}),\ \Eprint {http://arxiv.org/abs/2101.10873}
  {arXiv:2101.10873 [quant-ph]} \BibitemShut {NoStop}%
\bibitem [{\citenamefont {Wu}\ \emph {et~al.}(2021)\citenamefont {Wu},
  \citenamefont {Kondra}, \citenamefont {Rana}, \citenamefont {Scandolo},
  \citenamefont {Xiang}, \citenamefont {Li}, \citenamefont {Guo},\ and\
  \citenamefont {Streltsov}}]{Wu21}%
  \BibitemOpen
  \bibfield  {author} {\bibinfo {author} {\bibfnamefont {Kang-Da}\ \bibnamefont
  {Wu}}, \bibinfo {author} {\bibfnamefont {Tulja~Varun}\ \bibnamefont
  {Kondra}}, \bibinfo {author} {\bibfnamefont {Swapan}\ \bibnamefont {Rana}},
  \bibinfo {author} {\bibfnamefont {Carlo~Maria}\ \bibnamefont {Scandolo}},
  \bibinfo {author} {\bibfnamefont {Guo-Yong}\ \bibnamefont {Xiang}}, \bibinfo
  {author} {\bibfnamefont {Chuan-Feng}\ \bibnamefont {Li}}, \bibinfo {author}
  {\bibfnamefont {Guang-Can}\ \bibnamefont {Guo}}, \ and\ \bibinfo {author}
  {\bibfnamefont {Alexander}\ \bibnamefont {Streltsov}},\ }\bibfield  {title}
  {\enquote {\bibinfo {title} {Operational resource theory of imaginarity},}\
  }\href {\doibase 10.1103/PhysRevLett.126.090401} {\bibfield  {journal}
  {\bibinfo  {journal} {Phys. Rev. Lett.}\ }\textbf {\bibinfo {volume} {126}},\
  \bibinfo {pages} {090401} (\bibinfo {year} {2021})}\BibitemShut {NoStop}%
\bibitem [{\citenamefont {McKague}\ \emph {et~al.}(2009)\citenamefont
  {McKague}, \citenamefont {Mosca},\ and\ \citenamefont {Gisin}}]{McKague2009}%
  \BibitemOpen
  \bibfield  {author} {\bibinfo {author} {\bibfnamefont {Matthew}\ \bibnamefont
  {McKague}}, \bibinfo {author} {\bibfnamefont {Michele}\ \bibnamefont
  {Mosca}}, \ and\ \bibinfo {author} {\bibfnamefont {Nicolas}\ \bibnamefont
  {Gisin}},\ }\bibfield  {title} {\enquote {\bibinfo {title} {Simulating
  quantum systems using real hilbert spaces},}\ }\href {\doibase
  10.1103/PhysRevLett.102.020505} {\bibfield  {journal} {\bibinfo  {journal}
  {Phys. Rev. Lett.}\ }\textbf {\bibinfo {volume} {102}},\ \bibinfo {pages}
  {020505} (\bibinfo {year} {2009})}\BibitemShut {NoStop}%
\bibitem [{\citenamefont {Galv\~ao}\ and\ \citenamefont
  {Brod}(2020)}]{GalvaoB20}%
  \BibitemOpen
  \bibfield  {author} {\bibinfo {author} {\bibfnamefont {Ernesto~F.}\
  \bibnamefont {Galv\~ao}}\ and\ \bibinfo {author} {\bibfnamefont {Daniel~J.}\
  \bibnamefont {Brod}},\ }\bibfield  {title} {\enquote {\bibinfo {title}
  {Quantum and classical bounds for two-state overlaps},}\ }\href {\doibase
  10.1103/PhysRevA.101.062110} {\bibfield  {journal} {\bibinfo  {journal}
  {Phys. Rev. A}\ }\textbf {\bibinfo {volume} {101}},\ \bibinfo {pages}
  {062110} (\bibinfo {year} {2020})}\BibitemShut {NoStop}%
\bibitem [{\citenamefont {Tichy}(2015)}]{Tichy2015}%
  \BibitemOpen
  \bibfield  {author} {\bibinfo {author} {\bibfnamefont {M.~C.}\ \bibnamefont
  {Tichy}},\ }\bibfield  {title} {\enquote {\bibinfo {title} {Sampling of
  partially distinguishable bosons and the relation to the multidimensional
  permanent},}\ }\href {\doibase 10.1103/PhysRevA.91.022316} {\bibfield
  {journal} {\bibinfo  {journal} {Phys. Rev. A}\ }\textbf {\bibinfo {volume}
  {91}},\ \bibinfo {pages} {022316} (\bibinfo {year} {2015})}\BibitemShut
  {NoStop}%
\bibitem [{\citenamefont {Giordani}\ \emph {et~al.}(2021)\citenamefont
  {Giordani}, \citenamefont {Esposito}, \citenamefont {Hoch}, \citenamefont
  {Carvacho}, \citenamefont {Brod}, \citenamefont {Galvão}, \citenamefont
  {Spagnolo},\ and\ \citenamefont {Sciarrino}}]{Giordani21}%
  \BibitemOpen
  \bibfield  {author} {\bibinfo {author} {\bibfnamefont {Taira}\ \bibnamefont
  {Giordani}}, \bibinfo {author} {\bibfnamefont {Chiara}\ \bibnamefont
  {Esposito}}, \bibinfo {author} {\bibfnamefont {Francesco}\ \bibnamefont
  {Hoch}}, \bibinfo {author} {\bibfnamefont {Gonzalo}\ \bibnamefont
  {Carvacho}}, \bibinfo {author} {\bibfnamefont {Daniel~J.}\ \bibnamefont
  {Brod}}, \bibinfo {author} {\bibfnamefont {Ernesto~F.}\ \bibnamefont
  {Galvão}}, \bibinfo {author} {\bibfnamefont {Nicolò}\ \bibnamefont
  {Spagnolo}}, \ and\ \bibinfo {author} {\bibfnamefont {Fabio}\ \bibnamefont
  {Sciarrino}},\ }\bibfield  {title} {\enquote {\bibinfo {title} {Witnesses of
  coherence and dimension from multiphoton indistinguishability tests},}\
  }\href@noop {} {\bibfield  {journal} {\bibinfo  {journal} {Physical Review
  Research (in press)}\ } (\bibinfo {year} {2021})}\BibitemShut {NoStop}%
\bibitem [{\citenamefont {Tavakoli}\ \emph {et~al.}(2018)\citenamefont
  {Tavakoli}, \citenamefont {Kaniewski}, \citenamefont {V\'ertesi},
  \citenamefont {Rosset},\ and\ \citenamefont {Brunner}}]{selftesting2018}%
  \BibitemOpen
  \bibfield  {author} {\bibinfo {author} {\bibfnamefont {A.}~\bibnamefont
  {Tavakoli}}, \bibinfo {author} {\bibfnamefont {J.}~\bibnamefont {Kaniewski}},
  \bibinfo {author} {\bibfnamefont {T.}~\bibnamefont {V\'ertesi}}, \bibinfo
  {author} {\bibfnamefont {D.}~\bibnamefont {Rosset}}, \ and\ \bibinfo {author}
  {\bibfnamefont {N.}~\bibnamefont {Brunner}},\ }\bibfield  {title} {\enquote
  {\bibinfo {title} {Self-testing quantum states and measurements in the
  prepare-and-measure scenario},}\ }\href {\doibase 10.1103/PhysRevA.98.062307}
  {\bibfield  {journal} {\bibinfo  {journal} {Phys. Rev. A}\ }\textbf {\bibinfo
  {volume} {98}},\ \bibinfo {pages} {062307} (\bibinfo {year}
  {2018})}\BibitemShut {NoStop}%
\bibitem [{\citenamefont {Miklin}\ and\ \citenamefont
  {Oszmaniec}(2021)}]{Miklin2021}%
  \BibitemOpen
  \bibfield  {author} {\bibinfo {author} {\bibfnamefont {Nikolai}\ \bibnamefont
  {Miklin}}\ and\ \bibinfo {author} {\bibfnamefont {Micha{\l{}}}\ \bibnamefont
  {Oszmaniec}},\ }\bibfield  {title} {\enquote {\bibinfo {title} {A universal
  scheme for robust self-testing in the prepare-and-measure scenario},}\ }\href
  {\doibase 10.22331/q-2021-04-06-424} {\bibfield  {journal} {\bibinfo
  {journal} {{Quantum}}\ }\textbf {\bibinfo {volume} {5}},\ \bibinfo {pages}
  {424} (\bibinfo {year} {2021})}\BibitemShut {NoStop}%
\bibitem [{\citenamefont {Cruz}\ \emph {et~al.}(2019)\citenamefont {Cruz},
  \citenamefont {Fournier}, \citenamefont {Gremion}, \citenamefont {Jeannerot},
  \citenamefont {Komagata}, \citenamefont {Tosic}, \citenamefont
  {Thiesbrummel}, \citenamefont {Chan}, \citenamefont {Macris}, \citenamefont
  {Dupertuis},\ and\ \citenamefont {Javerzac-Galy}}]{Cruz2019}%
  \BibitemOpen
  \bibfield  {author} {\bibinfo {author} {\bibfnamefont {Diogo}\ \bibnamefont
  {Cruz}}, \bibinfo {author} {\bibfnamefont {Romain}\ \bibnamefont {Fournier}},
  \bibinfo {author} {\bibfnamefont {Fabien}\ \bibnamefont {Gremion}}, \bibinfo
  {author} {\bibfnamefont {Alix}\ \bibnamefont {Jeannerot}}, \bibinfo {author}
  {\bibfnamefont {Kenichi}\ \bibnamefont {Komagata}}, \bibinfo {author}
  {\bibfnamefont {Tara}\ \bibnamefont {Tosic}}, \bibinfo {author}
  {\bibfnamefont {Jarla}\ \bibnamefont {Thiesbrummel}}, \bibinfo {author}
  {\bibfnamefont {Chun~Lam}\ \bibnamefont {Chan}}, \bibinfo {author}
  {\bibfnamefont {Nicolas}\ \bibnamefont {Macris}}, \bibinfo {author}
  {\bibfnamefont {Marc-André}\ \bibnamefont {Dupertuis}}, \ and\ \bibinfo
  {author} {\bibfnamefont {Clément}\ \bibnamefont {Javerzac-Galy}},\
  }\bibfield  {title} {\enquote {\bibinfo {title} {Efficient quantum algorithms
  for ghz and w states, and implementation on the ibm quantum computer},}\
  }\href {\doibase https://doi.org/10.1002/qute.201900015} {\bibfield
  {journal} {\bibinfo  {journal} {Advanced Quantum Technologies}\ }\textbf
  {\bibinfo {volume} {2}},\ \bibinfo {pages} {1900015} (\bibinfo {year}
  {2019})}\BibitemShut {NoStop}%
\bibitem [{\citenamefont {{Arias-Castro}}\ \emph {et~al.}(2018)\citenamefont
  {{Arias-Castro}}, \citenamefont {{Javanmard}},\ and\ \citenamefont
  {{Pelletier}}}]{ProcrustesBound}%
  \BibitemOpen
  \bibfield  {author} {\bibinfo {author} {\bibfnamefont {Ery}\ \bibnamefont
  {{Arias-Castro}}}, \bibinfo {author} {\bibfnamefont {Adel}\ \bibnamefont
  {{Javanmard}}}, \ and\ \bibinfo {author} {\bibfnamefont {Bruno}\ \bibnamefont
  {{Pelletier}}},\ }\bibfield  {title} {\enquote {\bibinfo {title}
  {{Perturbation Bounds for Procrustes, Classical Scaling, and Trilateration,
  with Applications to Manifold Learning}},}\ }\href@noop {} {\bibfield
  {journal} {\bibinfo  {journal} {arXiv e-prints}\ ,\ \bibinfo {eid}
  {arXiv:1810.09569}} (\bibinfo {year} {2018})},\ \Eprint
  {http://arxiv.org/abs/1810.09569} {arXiv:1810.09569 [cs.LG]} \BibitemShut
  {NoStop}%
\bibitem [{\citenamefont {Fulton}\ and\ \citenamefont
  {Harris}(1991)}]{fulton1991representation}%
  \BibitemOpen
  \bibfield  {author} {\bibinfo {author} {\bibfnamefont {W.}~\bibnamefont
  {Fulton}}\ and\ \bibinfo {author} {\bibfnamefont {J.}~\bibnamefont
  {Harris}},\ }\href {https://books.google.pl/books?id=TuQZAQAAIAAJ} {\emph
  {\bibinfo {title} {Representation Theory: A First Course}}},\ Graduate Texts
  in Mathematics\ (\bibinfo  {publisher} {Springer New York},\ \bibinfo {year}
  {1991})\BibitemShut {NoStop}%
\bibitem [{\citenamefont {Goodman}\ and\ \citenamefont
  {Wallach}(2009)}]{goodman2009symmetry}%
  \BibitemOpen
  \bibfield  {author} {\bibinfo {author} {\bibfnamefont {R.}~\bibnamefont
  {Goodman}}\ and\ \bibinfo {author} {\bibfnamefont {N.R.}\ \bibnamefont
  {Wallach}},\ }\href {https://books.google.pl/books?id=tbSX5VPE4PIC} {\emph
  {\bibinfo {title} {Symmetry, Representations, and Invariants}}},\ Graduate
  Texts in Mathematics\ (\bibinfo  {publisher} {Springer New York},\ \bibinfo
  {year} {2009})\BibitemShut {NoStop}%
\bibitem [{Note2()}]{Note2}%
  \BibitemOpen
  \bibinfo {note} {Typically, in the context of invariant theory one considers
  \protect \emph {complex} vector spaces. In that setting the derivation of
  invariant polynomials is slightly more complicated - this is a consequence of
  the fact that even for one dimensional complex space $\protect \mathbb {C}$,
  general polynomials can be understood as two variate polynomials in $\protect
  \mathrm {Re}z,\protect \mathrm {Im}z$. We decided to limit ourselves to
  \protect \emph {real} vector spaces to avoid unnecessary technical
  complication.}\BibitemShut {Stop}%
\bibitem [{\citenamefont {Vrana}(2012)}]{vranaPHD}%
  \BibitemOpen
  \bibfield  {author} {\bibinfo {author} {\bibfnamefont {Peter}\ \bibnamefont
  {Vrana}},\ }\bibfield  {title} {\enquote {\bibinfo {title} {Group
  representations in entanglement theory},}\ }\href@noop {} {\bibfield
  {journal} {\bibinfo  {journal} {Phd Thesis}\ } (\bibinfo {year}
  {2012})}\BibitemShut {NoStop}%
\bibitem [{\citenamefont {Bergou}\ \emph {et~al.}(2012)\citenamefont {Bergou},
  \citenamefont {Futschik},\ and\ \citenamefont {Feldman}}]{Bergou12}%
  \BibitemOpen
  \bibfield  {author} {\bibinfo {author} {\bibfnamefont {J\'anos~A.}\
  \bibnamefont {Bergou}}, \bibinfo {author} {\bibfnamefont {Ulrike}\
  \bibnamefont {Futschik}}, \ and\ \bibinfo {author} {\bibfnamefont {Edgar}\
  \bibnamefont {Feldman}},\ }\bibfield  {title} {\enquote {\bibinfo {title}
  {Optimal unambiguous discrimination of pure quantum states},}\ }\href
  {\doibase 10.1103/PhysRevLett.108.250502} {\bibfield  {journal} {\bibinfo
  {journal} {Phys. Rev. Lett.}\ }\textbf {\bibinfo {volume} {108}},\ \bibinfo
  {pages} {250502} (\bibinfo {year} {2012})}\BibitemShut {NoStop}%
\end{thebibliography}%

\end{document}